\newtheorem{definition}{Definition}
\newtheorem{theorem}{Theorem}
\newtheorem{corollary}{Corollary}
\newtheorem{lemma}{Lemma}
\newtheorem{proposition}{Proposition}
\newtheorem{example}{Example}
\def\done{\hspace*{\fill} {{\small $\blacksquare$}}}
\def\extraspacing{\vspace{2mm}}
\def\figcapup{\vspace{-1mm}}
\def\figcapdown{\vspace{-1mm}}
\def\tblcapup{\vspace{-0mm}}
\def\tblcapdown{\vspace{-0mm}}
\def\tbldown{\vspace{-2mm}}
\def\codecapup{\vspace{-3mm}}
\def\codecapdown{\vspace{-2mm}}
\title{Differential Privacy via Wavelet Transforms}
\author{%
{Xiaokui Xiao \hspace{30mm} Guozhang Wang, \hspace{2mm} Johannes Gehrke}
\vspace{1.6mm}\\
\fontsize{10}{10}\selectfont\itshape
Nanyang Technological University \hspace{32mm} Cornell University \hspace{28mm}\\
Singapore \hspace{55mm} Ithaca, USA \hspace{15mm}\\
\fontsize{9}{9}\selectfont\ttfamily\upshape
xkxiao@ntu.edu.sg \hspace{24mm}  \{guoz, johannes\}@cs.cornell.edu \hspace{3mm}
}
\begin{document}
\maketitle
\begin{abstract}
Privacy preserving data publishing has attracted considerable research
interest in recent years. Among the existing solutions, {\em
  $\epsilon$-differential privacy} provides one of the strongest
privacy guarantees. Existing data publishing methods that achieve $\epsilon$-differential
privacy, however, offer little data utility. In particular, if the
output dataset is used to answer count queries, the noise in
the query answers can be proportional to the number of tuples in the
data, which renders the results useless.

In this paper, we develop a data publishing technique that ensures
$\epsilon$-differential privacy while providing accurate answers for
{\em range-count queries}, i.e., count queries where the predicate on
each attribute is a range. The core of our solution is a framework
that applies {\em wavelet transforms} on the data
before adding noise to it. We present instantiations of the proposed
framework for both ordinal and nominal data, and we provide a
theoretical analysis on their privacy and utility guarantees. In an
extensive experimental study on both real and synthetic data, we show
the effectiveness and efficiency of our solution.
\end{abstract}

\begin{keywords}
ignore
\end{keywords}

\section{Introduction} \label{sec:intro}

\noindent
\emph{The boisterous sea of liberty is never without a wave.}
--- Thomas Jefferson.
\vspace{1ex}

Numerous organizations, like census bureaus and hospitals, maintain
large collections of personal information (e.g., census data and
medical records). Such data collections are of significant research
value, and there is much benefit in making them publicly
available. Nevertheless, as the data is sensitive in nature, proper
measures must be taken to ensure that its publication does not
endanger the privacy of the individuals that contributed the data. A
canonical solution to this problem is to modify the data before
releasing it to the public, such that the modification prevents
inference of private information while retaining statistical
characteristics of the data.

A plethora of techniques have been proposed for privacy preserving
data publishing (see \cite{aw89,fwc10} for surveys). Existing
solutions make different assumptions about the background knowledge of
an adversary who would like to attack the data --- i.e., to learn the private
information about some individuals. Assumptions about the background
knowledge of the adversary determine what types of attacks are
possible \cite{wfw07,gks08,k09}. A solution that makes very
conservative assumptions about the adversary's background knowledge is
{\em $\epsilon$-differential privacy} \cite{dmn06}. Informally,
$\epsilon$-differential privacy requires that the data to be published
should be generated using a randomized algorithm $\mathcal{G}$, such
that the output of $\mathcal{G}$ is not very sensitive to any
particular tuple in the input, i.e., the output of $\mathcal{G}$
should rely mainly on general properties of the data. This ensures
that, by observing the data modified by $\mathcal{G}$, the adversary
is not able to infer much information about any individual tuple
in the input data, and hence, privacy is preserved.

The simplest method to enforce $\epsilon$-differential privacy, as
proposed by Dwork et al.\cite{dmn06}, is to first derive the frequency
distribution of the tuples in the input data, and then publish a noisy
version of the distribution. For example, given the medical records in
Table~\ref{tbl:intro-micro}, Dwork et al.'s method first maps the
records to the {\em frequency matrix} in Table~\ref{tbl:intro-freq},
where each entry in the first (second) column stores the number of
diabetes (non-diabetes) patients in Table~\ref{tbl:intro-micro} that
belong to a specific age group. After that, Dwork et al.'s method adds an
independent noise\footnote{Throughout the paper, we use the term
``noise'' to refer to a random variable with a zero mean.} with a
$\Theta(1)$ variance to each entry in Table~\ref{tbl:intro-freq} (we
will review this in detail in Section~\ref{sec:prelim-basic}), and
then publishes the noisy frequency matrix.

\begin{table}[t]
\begin{minipage}[t]{1.6 in}
\tblcapup \caption{Medical Records} \tblcapdown  \label{tbl:intro-micro}
\begin{small}
\begin{tabular}{|c|c|}
\hline
{\bf Age} & {\bf Has Diabetes?} \\ \hline
& \\ [-0.9em]\hline
$< 30$ & No \\ \hline
$< 30$ & No \\ \hline
$30$-$39$ & No \\ \hline
$40$-$49$ & No \\ \hline
$40$-$49$ & Yes \\ \hline
$40$-$49$ & No \\ \hline
$50$-$59$ & No \\ \hline
$\ge 60$ & Yes \\ \hline
\end{tabular}
\tbldown
\end{small}
\end{minipage}
\begin{minipage}[t]{1.6 in}
\tblcapup \caption{Frequency Matrix} \tblcapdown \label{tbl:intro-freq}
\begin{small}
\begin{tabular}{@{}r@{} c|c|c|}
\multicolumn{2}{c}{} & \multicolumn{2}{c}{\bf Has Diabetes?} \\ 
 & \multicolumn{1}{c}{} & \multicolumn{1}{c}{$\,\:$Yes$\,\:$} & \multicolumn{1}{c}{No}  \\ \cline{3-4}
\multicolumn{1}{@{}r@{}}{\multirow{5}{*}{\bf Age}} & \multicolumn{1}{c|}{$< 30$}  & $0$ & $2$ \\ \cline{3-4}
\multicolumn{1}{@{}r@{}}{} & \multicolumn{1}{c|}{$30$-$39$} & $0$ & $1$ \\ \cline{3-4}
\multicolumn{1}{@{}r@{}}{} & \multicolumn{1}{c|}{$40$-$49$} & $1$ & $2$ \\ \cline{3-4}
\multicolumn{1}{@{}r@{}}{} & \multicolumn{1}{c|}{$50$-$59$} & $0$ & $1$ \\ \cline{3-4}
\multicolumn{1}{@{}r@{}}{} & \multicolumn{1}{c|}{$\ge 60$} & $1$ & $0$ \\ \cline{3-4}
\end{tabular}
\tbldown
\end{small}
\end{minipage}
\end{table}

Intuitively, the noisy frequency matrix preserves privacy, as it
conceals the exact data distribution. In addition, the matrix can provide approximate results for any
queries about Table~\ref{tbl:intro-micro}. For instance, if a user
wants to know the number of diabetes patients with age under $50$,
then she can obtain an approximate answer by summing up the first
three entries in the first column of the noisy frequency matrix.

\extraspacing
\noindent {\bf Motivation.}
Dwork et al.'s method provides reasonable accuracy for queries about
individual entries in the frequency matrix, as it injects only a small
noise (with a constant variance) into each entry. For aggregate
queries that involve a large number of entries, however, Dwork et
al.'s method fails to provide useful results. In particular, for a
count query answered by taking the sum of a constant fraction of the entries in the noisy
frequency matrix, the approximate query result has a $\Theta(m)$ noise
variance, where $m$ denotes the total number of entries in
the matrix. Note that $m$ is typically an enormous number, as
practical datasets often contain multiple attributes with large
domains. Hence, a $\Theta(m)$ noise variance can render the
approximate result meaningless, especially when the actual result of
the query is small.

\extraspacing
\noindent {\bf Our Contributions.}
In this paper, we introduce {\em Privelet} (\underline{pri}vacy preserving
wa\underline{velet}), a data publishing technique that not only
ensures $\epsilon$-differential privacy, but also provides accurate
results for all {\em range-count queries}, i.e., count queries where
the predicate on each attribute is a range. Specifically, {\em
Privelet} guarantees that any range-count query can be answered with a
noise whose variance is polylogarithmic in $m$. This significantly
improves over the $O(m)$ noise variance bound provided by Dwork et al.'s method.

The effectiveness of {\em Privelet} results from a novel application
of {\em wavelet transforms}, a type of linear transformations that has
been widely adopted for image processing \cite{sds96} and approximate
query processing \cite{cgr01}. As with Dwork et al.'s method, {\em
Privelet} preserves privacy by modifying the frequency matrix $M$ of
the input data. Instead of injecting noise directly into $M$, however,
{\em Privelet} first applies a wavelet transform on $M$, converting
$M$ to another matrix $C$. {\em Privelet} then adds a
polylogarithmic noise to each entry in $C$, and maps $C$ back to
a noisy frequency matrix $M^*$. The matrix $M^*$ thus obtained has an
interesting property: The result of any range-count query on $M^*$ can
be expressed as a weighted sum of a polylogarithmic number of entries
in $C$. Furthermore, each of these entries contributes at most
polylogarithmic noise variance to the weighted sum. Therefore, the
variance of the noise in the query result is bounded by a
polylogarithm of $m$.

The remainder of the paper is organized as
follows. Section~\ref{sec:def} gives a formal problem definition and
reviews Dwork et al.'s solution. In Section~\ref{sec:overview}, we
present the {\em Privelet} framework for incorporating wavelet
transforms in data publishing, and we establish a sufficient condition
for achieving $\epsilon$-differential privacy under the framework. We
then instantiate the framework with three differential wavelet
transforms. Our first instantiation in Section \ref{sec:ord} is based
on the Haar wavelet transform \cite{sds96}, and is applicable for
one-dimensional ordinal data. Our second instantiation in Section
\ref{sec:nom} is based on a novel {\em nominal wavelet transform},
which is designed for tables with a single nominal attribute. Our
third instantiation in Section \ref{sec:multi} is a composition of the
first two and can handle multi-dimensional data with both ordinal and
nominal attributes. We conduct a rigorous analysis on the properties of
each instantiation, and provide theoretical bounds on privacy and
utility guarantees, as well as time complexities. In Section \ref{sec:exp}, we demonstrate the
effectiveness and efficiency of {\em Privelet} through extensive
experiments on both real and synthetic data. Section~\ref{sec:related} discusses related work. In
Section~\ref{sec:conclu}, we conclude with directions for future work.

\section{Preliminaries} \label{sec:def}

\subsection{Problem Definition} \label{sec:prelim-def}

Consider that we want to publish a relational table $T$ that contains
$d$ attributes $A_1$, $A_2$, ..., $A_d$, each of which is either {\em
  ordinal} (i.e., discrete and ordered) or {\em nominal} (i.e.,
discrete and unordered). Following previous work \cite{i02,gkk07}, we
assume that each nominal attribute $A_i$ in $T$ has an associated {\em
  hierarchy}, which is a tree where (i) each leaf is a value in the
domain of $A_i$, and (ii) each internal node summarizes the leaves in
its subtree. Figure~\ref{fig:prelim-hierarchy} shows an example
hierarchy of countries. We define $n$ as the number of tuples in $T$,
and $m$ as the size of the multi-dimensional domain on which $T$ is
defined, i.e., $m = \prod_{i=1}^{d}|A_i|$.

\begin{figure}[t]
\centering
\includegraphics[width=78mm]{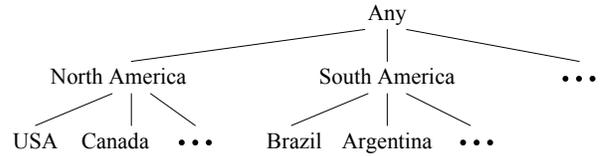}
\figcapup
\caption{A Hierarchy of Countries}
\figcapdown
\label{fig:prelim-hierarchy}
\end{figure}

We aim to release $T$ using
an algorithm that ensures $\epsilon$-differential privacy.
\extraspacing
\begin{definition}[$\epsilon$-Differential Privacy \cite{dmn06}] \label{def:prelim-epsilon}
A randomized algorithm $\mathcal{G}$ satisfies $\epsilon$-differential privacy,
if and only if (i) for any two tables $T_1$ and $T_2$ that differ only in
one tuple, and (ii) for any output $O$ of $\mathcal{G}$, we have
\begin{equation*} 
\; \quad \qquad Pr\left\{\mathcal{G}(T_1) = O\right\} \le e^{\epsilon} \cdot Pr\left\{\mathcal{G}(T_2) = O\right\}. \qquad \quad \; \blacksquare
\end{equation*}
\end{definition}
\extraspacing

We optimize the utility of the released data for OLAP-style
{\em range-count queries} in the following form:
\begin{tabbing}
\hspace{6mm} \=  \kill
\noindent \> \texttt{SELECT} \texttt{COUNT}(*) \texttt{FROM} $T$ \\
\noindent \> \texttt{WHERE} $A_1 \in S_1$ \texttt{AND} $A_2 \in S_2$ \texttt{AND} ...  \texttt{AND} $A_d \in S_d$
\end{tabbing}
For each ordinal attribute $A_i$, $S_i$ is an interval defined on the
domain of $A_i$. If $A_i$ is nominal, $S_i$ is a set that contains
either (i) a leaf in the hierarchy of $A_i$ or (ii) all leaves in the subtree of an internal node
in the hierarchy of $A_i$ --- this
is standard for OLAP-style navigation using roll-up or drill-down.
For example, given the hierarchy in Figure~\ref{fig:prelim-hierarchy},
examples of $S_i$ are $\{USA\}$, $\{Canada\}$, and the set of all countries in
North America, etc. Range-count queries are essential for various
analytical tasks, e.g., OLAP, association rule mining and decision tree construction over a data cube.

\subsection{Previous Approaches} \label{sec:prelim-basic}

As demonstrated in Section~\ref{sec:intro}, the information in $T$ can
be represented by a $d$-dimensional {\em frequency matrix} $M$ with
$m$ entries, such that (i) the $i$-th ($i \in [1, d]$) dimension of
$M$ is indexed by the values of $A_i$, and (ii) the entry in $M$ with
a coordinate vector $\langle x_1, x_2, \ldots, x_d \rangle$ stores the
number of tuples $t$ in $T$ such that $t = \langle x_1, x_2, \ldots,
x_d \rangle$. (This is the lowest level of the data
cube of $T$.) Observe that any range-count query on $T$ can be
answered using $M$, by summing up the entries in $M$ whose coordinates
satisfy all query predicates.

Dwork et al.\ \cite{dmn06} prove that $M$ can be released in a privacy
preserving manner by adding a small amount of noise to each entry in
$M$ independently. Specifically, if the noise $\eta$ follows a {\em
Laplace distribution} with a probability density function
\begin{equation} \label{eqn:basic-laplace}
Pr\{\eta = x\} = \frac{1}{2\lambda} e^{-|x|/\lambda},
\end{equation}
then the noisy frequency matrix ensures $(2/\lambda)$-differential
privacy. We refer to $\lambda$ as the {\em magnitude} of the
noise. Note that a Laplace noise with magnitude $\lambda$ has a
variance $2 \lambda^2$.

\extraspacing \noindent
{\bf Privacy Analysis.} To explain why Dwork et al.'s method ensures
privacy, suppose that we arbitrarily modify a tuple in $T$. In that
case, the frequency matrix of $T$ will change in exactly two entries,
each of which will be decreased or increased by one. For example,
assume that we modify the first tuple in Table~\ref{tbl:intro-micro},
by setting its age value to ``$30$-$39$''. Then, in the frequency
matrix in Table~\ref{tbl:intro-freq}, the first (second) entry of the
second column will be decreased (increased) by one. Intuitively, such
small changes in the entries can be easily offset by the noise added
to the frequency matrix. In other words, the noisy matrix is
insensitive to any modification to a single tuple in $T$. Thus, it is
difficult for an adversary to infer private information from the noisy
matrix. More formally, Dwork et al.'s method is based on the concept
of {\em sensitivity}.

\extraspacing
\begin{definition} [Sensitivity \cite{dmn06}]
Let $F$ be a set of functions, such that the output of each function
$f \in F$ is a real number. The sensitivity of $F$ is defined as
\begin{equation}
S(F) = \max_{T_1, T_2} \sum_{f \in F} \left|f(T_1) - f(T_2)\right|,
\end{equation}
where $T_1$ and $T_2$ are any two tables that differ in only one tuple. \done
\end{definition}
\extraspacing

Note that the frequency matrix $M$ of $T$ can be regarded as the
outputs of a set of functions, such that each function maps $T$ to an
entry in $M$. Modifying any tuple in $T$ will only change the values
of two entries (in $M$) by one. Therefore, the set of functions
corresponding to $M$ has a sensitivity of $2$. The following theorem
shows a sufficient condition for $\epsilon$-differential privacy.

\extraspacing
\begin{theorem}[\cite{dmn06}] \label{thrm:prelim-basic-epsilon}
Let $F$ be a set of functions with a sensitivity $S(F)$. Let
$\mathcal{G}$ be an algorithm that adds independent noise to the
output of each function in $F$, such that the noise follows a Laplace
distribution with magnitude $\lambda$. Then, $\mathcal{G}$ satisfies
$(S(F)/\lambda)$-differential privacy. \done
\end{theorem}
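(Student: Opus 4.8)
The plan is to compare, for an arbitrary fixed output $O$, the likelihood that $\mathcal{G}$ produces $O$ on input $T_1$ against the likelihood on $T_2$, and to show that this ratio never exceeds $e^{S(F)/\lambda}$. Writing $F = \{f_1, \ldots, f_k\}$ and $O = \langle o_1, \ldots, o_k \rangle$, the event $\mathcal{G}(T) = O$ occurs exactly when the noise added to $f_i(T)$ equals $o_i - f_i(T)$ for every $i$. Because $\mathcal{G}$ draws these $k$ noise values independently, the joint probability density factorizes over the coordinates, and substituting the Laplace density of Eq.~(\ref{eqn:basic-laplace}) I would write
\[
\frac{Pr\{\mathcal{G}(T_1) = O\}}{Pr\{\mathcal{G}(T_2) = O\}} = \prod_{i=1}^{k} \frac{\frac{1}{2\lambda}\,e^{-|o_i - f_i(T_1)|/\lambda}}{\frac{1}{2\lambda}\,e^{-|o_i - f_i(T_2)|/\lambda}}.
\]

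The normalizing constants $\tfrac{1}{2\lambda}$ cancel within each factor, leaving a single exponential whose exponent is $\tfrac{1}{\lambda}\sum_{i=1}^{k}\bigl(|o_i - f_i(T_2)| - |o_i - f_i(T_1)|\bigr)$. The crux of the argument is then to bound each summand using the reverse triangle inequality, $|o_i - f_i(T_2)| - |o_i - f_i(T_1)| \le |f_i(T_1) - f_i(T_2)|$, which holds uniformly in the output coordinate $o_i$. Summing over $i$ yields
\[
\frac{Pr\{\mathcal{G}(T_1) = O\}}{Pr\{\mathcal{G}(T_2) = O\}} \le \exp\!\left(\frac{1}{\lambda}\sum_{i=1}^{k}|f_i(T_1) - f_i(T_2)|\right).
\]

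Finally I would invoke the definition of sensitivity: since $T_1$ and $T_2$ differ in only one tuple, $\sum_{i=1}^{k}|f_i(T_1) - f_i(T_2)| \le S(F)$. Hence the ratio is at most $e^{S(F)/\lambda}$, which is precisely the condition of Definition~\ref{def:prelim-epsilon} with $\epsilon = S(F)/\lambda$, and as the roles of $T_1$ and $T_2$ are symmetric the bound holds in both directions. I expect no deep obstacle here — the result is a short and clean calculation — and the only points deserving care are the reverse triangle inequality step, which is what converts per-coordinate output deviations into the input-level sensitivity bound, and the implicit reading of $Pr\{\mathcal{G}(T) = O\}$ as a probability density over the continuous output space (so that the stated inequality is understood to hold for densities, equivalently for the probability assigned to any measurable set of outputs).
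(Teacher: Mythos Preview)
Your argument is correct and is the standard proof of the Laplace mechanism. Note, however, that the paper does not actually supply a proof of Theorem~\ref{thrm:prelim-basic-epsilon}: it is stated as a known result from \cite{dmn06} and marked \done\ without further justification. The closest the paper comes is the proof of Lemma~\ref{lmm:over-privacy} in Appendix~\ref{sec:proof-over-privacy}, which generalizes the same idea to weighted noise magnitudes; that proof follows the identical template (factor the density ratio, cancel constants, apply the triangle inequality, invoke the sensitivity bound), with the only wrinkle being a detour through an intermediate table $T_3 = T_1 \cap T_2$ because the generalized sensitivity is phrased in terms of single-entry changes in the frequency matrix rather than single-tuple changes in the table.
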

\extraspacing

By Theorem~\ref{thrm:prelim-basic-epsilon}, Dwork et al.'s method
guarantees $(2 / \lambda)$-differential privacy, since $M$ corresponds
to a set of queries on $T$ with a sensitivity of $2$.

\extraspacing \noindent
{\bf Utility Analysis.} Suppose that we answer a range-count query
using a noisy frequency matrix $M^*$ generated by Dwork et al.'s
method. The noise in the query result has a variance $\Theta(m /
\epsilon^2)$ in the worst case. This is because (i) each entry in
$M^*$ has a noise variance $8 / \epsilon^2$ (by
Equation~\ref{eqn:basic-laplace} and $\epsilon = 2/\lambda$), and (ii)
a range-count query may cover up to $m$ entries in $M^*$. Therefore, although
Dwork et al.'s method provides reasonable accuracy for queries that involve a small number of
entries in $M^*$, it offers unsatisfactory utility for large queries that cover many
entries in $M^*$.

\section{The Privelet Framework} \label{sec:overview}

This section presents an overview of our {\em Privelet} technique. We
first clarify the key steps of {\em Privelet} in
Section~\ref{sec:overview-steps}, and then provide in
Section~\ref{sec:overview-privacy} a sufficient condition for
achieving $\epsilon$-differential privacy with {\em Privelet}.

\subsection{Overview of Privelet} \label{sec:overview-steps}

Our {\em Privelet} technique takes as input a relational table $T$ and
a parameter $\lambda$ and outputs a noisy version $M^*$ of the
frequency matrix $M$ of $T$. At a high level, {\em Privelet} works in
three steps as follows.

First, it applies a {\em wavelet transform} on
$M$. Generally speaking, a wavelet transform is an invertible linear
function, i.e., it maps $M$ to another matrix $C$, such that (i) each
entry in $C$ is a linear combination of the entries in $M$, and (ii)
$M$ can be losslessly reconstructed from $C$. The entries in $C$ are
referred to as the {\em wavelet coefficients}. Note that wavelet transforms are traditionally only defined for ordinal data,
and we create a special extension for nominal data in our setting.

Second, {\em Privelet} adds an independent Laplace noise
to each wavelet coefficient in a way that ensures
$\epsilon$-differential privacy. This results in a new matrix $C^*$
with noisy coefficients.  In the third step, {\em Privelet} (optionally) refines $C^*$, and then
maps $C^*$ back to a noisy frequency matrix $M^*$, which is returned
as the output. The refinement of $C^*$ may arbitrarily modify $C^*$,
but it does not utilize any information from $T$ or $M$. In other words, the third step of
{\em Privelet} depends only on $C^*$. This ensures that {\em Privelet} does not leak any information of $T$,
except for what has been disclosed in $C^*$. Our
solution in Section~\ref{sec:nom} incorporates a refinement procedure
to achieve better utility for range-count queries.

\subsection{Privacy Condition} \label{sec:overview-privacy}

The privacy guarantee of {\em Privelet} relies on its second step,
where it injects Laplace noise into the wavelet coefficient matrix
$C$. To understand why this achieves $\epsilon$-differential privacy,
recall that, even if we arbitrarily replace one tuple in the input
data, only two entries in the frequency matrix $M$ will be altered. In
addition, each of those two entries will be offset by exactly
one. This will incur only linear changes in the wavelet coefficients
in $C$, since each coefficient is a linear combination of the entries
in $M$. Intuitively, such linear changes can be concealed, as long as
an appropriate amount of noise is added to $C$.

In general, the noise required for each wavelet coefficient varies, as
each coefficient reacts differently to changes in $M$. {\em Privelet}
decides the amount of noise for each coefficient based on a {\em
  weight function} $\mathcal{W}$, which maps each coefficient to a
positive real number. In particular, the magnitude of the noise for a
coefficient $c$ is always set to $\lambda / \mathcal{W}(c)$, i.e., a
larger weight leads to a smaller noise. To analyze the privacy
implication of such a noise injection scheme, we introduce the concept
of {\em generalized sensitivity}.

\extraspacing
\begin{definition} [Generalized Sensitivity] \label{def:over-amplify}
Let $F$ be a set of functions, each of which takes as input a matrix
and outputs a real number. Let $\mathcal{W}$ be a function that
assigns a weight to each function $f \in F$. The generalized
sensitivity of $F$ with respect to $\mathcal{W}$ is defined as the
smallest number $\rho$ such that
\begin{equation} \nonumber
\sum_{f \in F} \Big( \mathcal{W}(f) \cdot \left|f(M) - f(M')\right| \Big) \le \rho \cdot \left\|M - M'\right\|_1,
\end{equation}
where $M$ and $M'$ are any two matrices that differ in only one entry,
and $\|M - M'\|_1 = \sum_{v \in M - M'} |v|$ is the $L_1$ distance
between $M$ and $M'$. \done
\end{definition}
\extraspacing

Observe that each wavelet coefficient $c$ can be regarded as the
output of a function $f$ that maps the frequency matrix $M$ to a real
number. Thus, the wavelet transform can be regarded as the set of functions corresponding
to the wavelet coefficients. The weight
$\mathcal{W}(c)$ we assign to each coefficient $c$ can be thought of
as a weight given to the function associated with $c$. Intuitively, the
generalized sensitivity captures the ``weighted'' sensitivity of the
wavelet coefficients with respect to changes in $M$. The following
lemma establishes the connection between generalized sensitivity and
$\epsilon$-differential privacy.

\extraspacing
\begin{lemma} \label{lmm:over-privacy}
Let $F$ be a set of functions that has a generalized sensitivity
$\rho$ with respect to a weight function $\mathcal{W}$. Let
$\mathcal{G}$ be a randomized algorithm that takes as input a table
$T$ and outputs a set $\{f(M) + \eta(f) \mid f \in F\}$ of real
numbers, where $M$ is the frequency matrix of $T$, and $\eta(f)$ is a
random variable that follows a Laplace distribution with magnitude
$\lambda / \mathcal{W}(f)$. Then, $\mathcal{G}$ satisfies $(2
\rho/\lambda)$-differential privacy.
\end{lemma}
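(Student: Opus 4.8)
The plan is to bound, for any two neighbouring tables, the ratio of the probabilities (densities) that $\mathcal{G}$ produces a fixed output, and to show this ratio never exceeds $e^{2\rho/\lambda}$. First I would fix two tables $T_1$ and $T_2$ that differ in a single tuple, and let $M_1$ and $M_2$ be their respective frequency matrices. As observed in the privacy analysis of Section~\ref{sec:prelim-basic}, altering one tuple changes exactly two entries of the frequency matrix, one increased and one decreased by one; hence $M_1$ and $M_2$ differ in precisely two entries and $\|M_1 - M_2\|_1 = 2$.

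Next I would write down the output density. Because the noise terms $\eta(f)$ are mutually independent and each $\eta(f)$ is Laplace with magnitude $\lambda/\mathcal{W}(f)$, the probability of a fixed output $O = \{o_f \mid f \in F\}$ factors into a product of per-coefficient densities evaluated at $o_f - f(M)$. Forming the ratio for the two inputs, the normalising constants $\mathcal{W}(f)/(2\lambda)$ cancel, leaving
\[
\frac{Pr\{\mathcal{G}(T_1) = O\}}{Pr\{\mathcal{G}(T_2) = O\}}
= \exp\!\left(\frac{1}{\lambda}\sum_{f \in F} \mathcal{W}(f)\big(|o_f - f(M_2)| - |o_f - f(M_1)|\big)\right).
\]
Applying the triangle inequality $|o_f - f(M_2)| - |o_f - f(M_1)| \le |f(M_1) - f(M_2)|$ termwise bounds the exponent by $\frac{1}{\lambda}\sum_{f} \mathcal{W}(f)\,|f(M_1) - f(M_2)|$.

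The one genuinely delicate step --- and the part I expect to be the main obstacle --- is that Definition~\ref{def:over-amplify} bounds the weighted sum only for matrices differing in a \emph{single} entry, whereas $M_1$ and $M_2$ differ in \emph{two}. To bridge this gap I would introduce an intermediate matrix $M'$ that agrees with $M_1$ on one of the two altered entries and with $M_2$ on the other, so that $M_1$ and $M'$ differ in exactly one entry (with $L_1$ distance $1$), and likewise for $M'$ and $M_2$. A termwise triangle inequality then gives
\[
\sum_{f} \mathcal{W}(f)\,|f(M_1) - f(M_2)| \le \sum_{f} \mathcal{W}(f)\,|f(M_1) - f(M')| + \sum_{f} \mathcal{W}(f)\,|f(M') - f(M_2)|,
\]
and two applications of the generalized-sensitivity bound cap the right-hand side by $\rho \cdot 1 + \rho \cdot 1 = 2\rho$.

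Combining these steps, the density ratio is at most $\exp(2\rho/\lambda)$ for every output $O$ and every pair of neighbouring tables, which is exactly the condition for $(2\rho/\lambda)$-differential privacy in Definition~\ref{def:prelim-epsilon}. The remaining details --- verifying the product factorisation and the cancellation of the normalising constants --- are routine.
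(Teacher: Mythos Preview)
Your proposal is correct and follows essentially the same route as the paper: the paper also factors the density ratio, cancels the Laplace normalising constants, applies the triangle inequality, and then bridges the two-entry gap via an intermediate matrix (it uses $M_3$, the frequency matrix of $T_3 = T_1 \cap T_2$, which plays exactly the role of your $M'$) to invoke the generalized-sensitivity bound twice and obtain $2\rho$. The only cosmetic difference is that the paper names its intermediate as coming from an actual table, whereas you construct $M'$ directly as a matrix --- but since Definition~\ref{def:over-amplify} is stated for arbitrary matrices, your version is equally valid.
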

\begin{proof}
See Appendix~\ref{sec:proof-over-privacy}.
\end{proof}
\extraspacing

By Lemma~\ref{lmm:over-privacy}, if a wavelet transform has a
generalized sensitivity $\rho$ with respect to weight function
$\mathcal{W}$, then we can achieve $\epsilon$-differential privacy by
adding to each wavelet coefficient $c$ a Laplace noise with magnitude $2 \rho / \mathcal{W}(c)$. This justifies the noise
injection scheme of {\em Privelet}.

\section{Privelet for One-Dimensional Ordinal Data} \label{sec:ord}

This section instantiates the {\em Privelet} framework with the
one-dimensional {\em Haar wavelet transform} \cite{sds96} (HWT), a
popular technique for processing one-dimensional ordinal data. The
one-dimensional HWT requires the input to be a
vector that contains $2^l$ ($l \in \mathds{N}$) totally ordered
elements. Accordingly, we assume wlog.~that (i) the frequency matrix
$M$ has a single ordinal dimension, and (ii) the number $m$ of entries
in $M$ equals $2^l$ (this can be ensured by inserting dummy values
into $M$ \cite{sds96}). We first explain the HWT in
Section~\ref{sec:ord-haar}, and then present the instantiation of {\em
  Privelet} in Section~\ref{sec:ord-prive}.

\subsection{One-Dimensional Haar Wavelet Transform} \label{sec:ord-haar}

The HWT converts $M$ into $2^l$ wavelet coefficients as
follows. First, it constructs a full binary tree $R$ with $2^l$
leaves, such that the $i$-th leaf of $R$ equals the $i$-th entry in
$M$ ($i \in [1, 2^l]$). It then generates a wavelet coefficient $c$
for each internal node $N$ in $R$, such that $c = (a_1 - a_2)/2$,
where $a_1$ ($a_2$) is the average value of the leaves in the left
(right) subtree of $N$. After all internal nodes in $R$ are processed,
an additional coefficient (referred to as the {\em base coefficient})
is produced by taking the mean of all leaves in $R$.  For convenience,
we refer to $R$ as the {\em decomposition tree} of $M$, and slightly
abuse notation by not distinguishing between an internal node in $R$
and the wavelet coefficient generated for the node.

\begin{figure}[t]
\centering
\includegraphics[width=45mm]{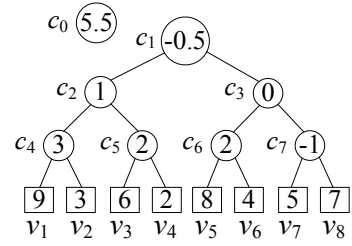}
\figcapup
\caption{One-Dimensional Haar Wavelet Transform}
\figcapdown
\label{fig:ord-haar}
\end{figure}

\extraspacing
\begin{example}
Figure~\ref{fig:ord-haar} illustrates an HWT on a
one-dimensional frequency matrix $M$ with $8$ entries $v_1, \ldots,
v_8$. Each number in a circle (square) shows the value of a wavelet
coefficient (an entry in $M$). The base coefficient $c_0$ equals the
mean $5.5$ of the entries in $M$. The coefficient $c_1$ has a value
$-0.5$, because (i) the average value of the leaves in its left
(right) subtree equals $5$ ($6$), and (ii) $(5 - 6)/2 = -0.5$. \done
\end{example}
\extraspacing

Given the Haar wavelet coefficients of $M$, any entry $v$ in $M$ can
be easily reconstructed. Let $c_0$ be the base coefficient, and $c_i$
($i \in [1, l]$) be the ancestor of $v$ at level $i$ of the
decomposition tree $R$ (we regard the root of $R$ as level $1$). We
have
\begin{equation} \label{eqn:ord-recon}
v = c_0 + \sum_{i=1}^l \left(g_i \cdot c_i\right),
\end{equation}
where $g_i$ equals $1$ ($-1$) if $v$ is in the left (right) subtree of $c_i$.

\extraspacing
\begin{example}
In the decomposition tree in Figure~\ref{fig:ord-haar}, the leaf $v_2$
has three ancestors $c_1=-0.5$, $c_2=1$, and $c_4=3$. Note that $v_2$
is in the right (left) subtree of $c_4$ ($c_1$ and $c_2$), and the
base coefficient $c_0$ equals $5.5$. We have $v_2 = 3 = c_0 + c_1 +
c_2 - c_4$. \done
\end{example}

\subsection{Instantiation of Privelet} \label{sec:ord-prive}

{\em Privelet} with the one-dimensional HWT follows the three-step
paradigm introduced in Section~\ref{sec:overview-steps}. Given a
parameter $\lambda$ and a table $T$ with a single ordinal attribute,
{\em Privelet} first computes the Haar wavelet coefficients of the
frequency matrix $M$ of $T$.  It then adds to each coefficient $c$
a random Laplace noise with magnitude $\lambda / \mathcal{W}_{Haar}(c)$,
where $\mathcal{W}_{Haar}$ is a weight function defined as follows:
For the base coefficient $c$, $\mathcal{W}_{Haar}(c)=m$; for a
coefficient $c_i$ at level $i$ of the decomposition tree,
$\mathcal{W}_{Haar}(c_i) = 2^{l-i+1}$. For example, given the wavelet
coefficients in Figure~\ref{fig:ord-haar}, $\mathcal{W}_{Haar}$ would
assign weights $8$, $8$, $4$, $2$ to $c_0$, $c_1$, $c_2$, and $c_4$,
respectively. After the noisy wavelet coefficients are computed, {\em
  Privelet} converts them back to a noisy frequency matrix $M^*$ based
on Equation~\ref{eqn:ord-recon}, and then terminates by returning
$M^*$.

This instantiation of {\em Privelet} with the one-dimensional HWT has the
following property.  \extraspacing
\begin{lemma} \label{lmm:ord-amp}
The one-dimensional HWT has a generalized sensitivity of $1 + \log_2
m$ with respect to the weight function $\mathcal{W}_{Haar}$.
\end{lemma}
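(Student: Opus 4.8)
The plan is to exploit the fact that $M$ and $M'$ differ in exactly one entry, so that the right-hand side of the generalized-sensitivity inequality is simply $\|M - M'\|_1 = |\delta|$, where $\delta$ denotes the signed difference in that single entry. The goal then reduces to evaluating the weighted left-hand side $\sum_{f \in F} \mathcal{W}_{Haar}(f)\,|f(M) - f(M')|$ exactly, and showing it equals $(1 + \log_2 m)\,|\delta|$ irrespective of which entry is perturbed and by how much. Since the inequality would then hold with equality, the smallest admissible $\rho$ is precisely $1 + \log_2 m$.

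First I would identify which coefficients actually change when a single entry $v$ is altered by $\delta$. Because every wavelet coefficient is, up to the defining averaging and the $1/2$ factor, determined by the leaves in one node's subtree, a coefficient is affected only if $v$ lies in that subtree. The base coefficient, being the mean of all $m$ leaves, is always affected; among the internal-node coefficients, exactly one per level is affected, namely the ancestor $c_i$ of $v$ at each level $i \in [1, l]$ (these are precisely the $c_i$ appearing in the reconstruction formula, Equation~\ref{eqn:ord-recon}). All other coefficients are left unchanged, since $v$ lies outside their subtrees. This yields a total of $l + 1$ affected coefficients.

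Next I would compute the weighted contribution of each affected coefficient and observe that $\mathcal{W}_{Haar}$ is calibrated so that every contribution equals $|\delta|$. A node at level $i$ has $2^{l-i+1}$ leaves in its subtree, split evenly into left and right halves of $2^{l-i}$ leaves each; perturbing $v$ therefore shifts the relevant subtree average by $\delta/2^{l-i}$ and hence shifts $c_i$ by $\pm\,\delta / 2^{l-i+1}$. Multiplying by the weight $\mathcal{W}_{Haar}(c_i) = 2^{l-i+1}$ gives a contribution of exactly $|\delta|$. Likewise, the base coefficient changes by $\delta/m$ and, with weight $m$, contributes $|\delta|$. Summing the $l + 1$ identical contributions gives $(l+1)\,|\delta| = (1 + \log_2 m)\,|\delta|$, which establishes the claim. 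There is no genuine obstacle beyond keeping the subtree sizes and the factor of $1/2$ straight in the per-coefficient change computation; the conceptual heart of the argument is that a single-entry perturbation touches exactly one coefficient per level, and that the weights are chosen to normalize each level's weighted perturbation to $|\delta|$.
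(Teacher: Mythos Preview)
Your proposal is correct and follows essentially the same argument as the paper's proof: identify the $1 + \log_2 m$ affected coefficients (the base coefficient plus one ancestor per level), compute that the change in a level-$i$ coefficient is $|\delta|/2^{l-i+1}$, and observe that the weights $\mathcal{W}_{Haar}$ normalize every weighted contribution to $|\delta|$. Your presentation is slightly more explicit about the subtree sizes and the factor of $1/2$, and you additionally note that equality holds (so $1+\log_2 m$ is the exact generalized sensitivity, not merely an upper bound), but the core reasoning is identical.
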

\begin{proof}
Let $C$ be the set of Haar wavelet coefficients of the input matrix
$M$. Observe that, if we increase or decrease any entry $v$ in $M$ by
a constant $\delta$, only $1 + \log_2 m$ coefficients in $C$ will be
changed, namely, the base coefficient $c_0$ and all ancestors of $v$
in the decomposition tree $R$. In particular, $c_0$ will be offset by
$\delta / m$; for any other coefficient, if it is at level $i$ of the decomposition tree $R$, then it will change
by $\delta / 2^{l-i+1}$. Recall that $\mathcal{W}_{Haar}$ assigns a
weight of $m$ to $c_0$, and a weight of $2^{l-i+1}$ to any coefficient
at level $i$ of $R$. Thus, the generalized sensitivity of the
one-dimensional Haar wavelet transform with respect to
$\mathcal{W}_{Haar}$ is
\begin{equation*}
\Big(m \cdot \delta / m + \sum_{i=1}^{l}\big(2^{l-i+1} \cdot 2 \cdot
\delta / 2^{l-i+1}\big) \Big) / \delta \; = \; 1 + \log_2m.
\end{equation*}
\end{proof}
\extraspacing

By Lemmas \ref{lmm:over-privacy} and \ref{lmm:ord-amp}, {\em Privelet}
with the one-dimensional HWT ensures
$\epsilon$-differential privacy with $\epsilon = 2 (1 + \log_2 m) /
\lambda$, where $\lambda$ is the input parameter. On the other hand,
       {\em Privelet} also provides strong utility guarantee for
       range-count queries, as shown in the following lemma.

\extraspacing
\begin{lemma} \label{lmm:ord-utility}
Let $C$ be a set of one-dimensional Haar wavelet coefficients such
that each coefficient $c \in C$ is injected independent noise with a
variance at most $\left(\sigma / \mathcal{W}_{Haar}(c)\right)^2$. Let
$M^*$ be the noisy frequency matrix reconstructed from $C$. For any
range-count query answered using $M^*$, the variance of noise in the
answer is at most $(2 + \log_2 |M^*|)/2 \cdot \sigma^2$.
\end{lemma}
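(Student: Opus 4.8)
The plan is to write the noise in the query answer as an explicit linear combination of the per-coefficient noises and then bound its variance level by level in the decomposition tree $R$.

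First I would use the reconstruction formula (Equation~\ref{eqn:ord-recon}) to express the answer of a range-count query. A one-dimensional range-count query selects an interval of entries in $M^*$, so its answer is the sum $\sum_v v$ over that interval, and by (\ref{eqn:ord-recon}) each reconstructed entry equals $c_0 + \sum_i g_i c_i$, where the $c_i$ are the ancestors of $v$ in $R$. Swapping the order of summation, the noise in the answer becomes $\sum_{c \in C} f_c \, \eta(c)$, where $\eta(c)$ is the noise injected into $c$ and $f_c$ is the signed number of in-range leaves lying under $c$: for the base coefficient $f_{c_0}$ equals the number $k$ of entries in the range, while for an internal node $c$ we have $f_c = n_L - n_R$, with $n_L$ (resp.\ $n_R$) the number of in-range leaves in the left (resp.\ right) subtree of $c$. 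Since the $\eta(c)$ are independent, the variance of the answer is $\sum_c f_c^2 \, \mathrm{Var}(\eta(c)) \le \sigma^2 \sum_c \big(f_c/\mathcal{W}_{Haar}(c)\big)^2$.

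Next I would bound this weighted sum one level of $R$ at a time. The base term is immediate: $f_{c_0} = k \le m$ and $\mathcal{W}_{Haar}(c_0) = m$, so it contributes at most $\sigma^2$. For the internal nodes, the key combinatorial observation is that at each level $i$ the coefficients partition the leaves into consecutive blocks of $2^{l-i+1}$ leaves, and because the query range is an interval, every block that is either disjoint from the range or entirely contained in it has $n_L = n_R$ and hence $f_c = 0$. Thus only the (at most two) blocks straddling the two endpoints of the range can have a nonzero $f_c$. For any such $c$, each of its two subtrees holds $2^{l-i}$ leaves, so $|f_c| = |n_L - n_R| \le 2^{l-i}$, while $\mathcal{W}_{Haar}(c) = 2^{l-i+1}$.

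Combining these facts, the contribution of level $i$ is at most $2 \cdot \big(2^{l-i}/2^{l-i+1}\big)^2 = 1/2$ times $\sigma^2$, independent of $i$. Summing the base term and the $l = \log_2 m$ levels then yields a total variance of at most $(1 + l/2)\,\sigma^2 = (2 + \log_2 |M^*|)/2 \cdot \sigma^2$, as claimed. I expect the main obstacle to be the second step: arguing cleanly that an interval range leaves at most two coefficients per level with $f_c \neq 0$ while simultaneously bounding $|f_c|$ by half the subtree size. Once that structural claim is in place, the arithmetic collapses level by level and the stated bound follows.
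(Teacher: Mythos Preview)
Your proposal is correct and follows essentially the same approach as the paper's own proof: both express the query answer as a linear combination of the coefficients via the reconstruction formula, observe that at each level of $R$ at most two coefficients (those whose subtrees straddle an endpoint of the interval) carry a nonzero weight, bound that weight by $2^{l-i}$ against $\mathcal{W}_{Haar}(c)=2^{l-i+1}$ to get a per-coefficient contribution of at most $\sigma^2/4$, and then sum the base term plus the $l$ levels. The only differences are cosmetic notation (your $n_L,n_R,f_c$ versus the paper's $\alpha(c),\beta(c)$).
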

\begin{proof}
See Appendix~\ref{sec:proof-ord-utility}.
\end{proof}
\extraspacing

By Lemmas \ref{lmm:ord-amp} and \ref{lmm:ord-utility}, {\em Privelet}
achieves $\epsilon$-differential privacy while ensuring that the
result of any range-count query has a noise variance bounded by
\begin{equation} \label{eqn:ord-bound}
(2 + \log_2 m) \cdot (2 + 2 \log_2 m)^2 / \epsilon^2 \; = \; O\big((\log_2 m)^3/\epsilon^2\big)
\end{equation}
In contrast, as discussed in Section~\ref{sec:prelim-basic}, with the
same privacy requirement, Dwork et al.'s method incurs a noise
variance of $O(m/\epsilon^2)$ in the query answers.

Before closing this section, we point out that {\em Privelet} with the
one-dimensional HWT has an $O(n + m)$ time
complexity for construction. This follows from the facts that (i)
mapping $T$ to $M$ takes $O(m+n)$ time, (ii) converting $M$ to and
from the Haar wavelet coefficients incur $O(m)$ overhead \cite{sds96},
and (iii) adding Laplace noise to the coefficients takes $O(m)$ time.

\section{Privelet for One-dimensional Nominal Data} \label{sec:nom}

This section extends {\em Privelet} for one-dimensional nominal data
by adopting a novel {\em nominal wavelet
  transform}. Section~\ref{sec:nom-wave}
introduces the new transform, and Section~\ref{sec:nom-noise} explains
the noise injection scheme for nominal wavelet
coefficients. Section~\ref{sec:nom-bounds} analyzes the privacy and
utility guarantees of the algorithm and its time
complexity. Section~\ref{sec:nom-compare} compares the algorithm with
an alternative solution that employs the HWT.

\subsection{Nominal Wavelet Transform} \label{sec:nom-wave}

Existing wavelet transforms are only designed for ordinal data, i.e.,
they require that each dimension of the input matrix needs to have a
totally ordered domain. Hence, they are not directly applicable on
nominal data, since the values of a nominal attribute $A$ are not
totally ordered.  One way to circumvent this issue is to impose an
artificial total order on the domain of $A$, such that for any
internal node $N$ in the hierarchy of $A$, the set of leaves in the
subtree of $N$ constitutes a contiguous sequence in the total order.

\begin{figure}[t]
\centering
\includegraphics[width=85mm]{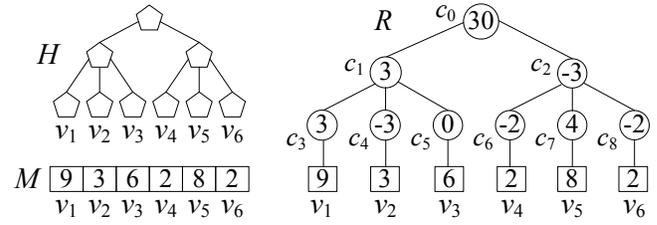}
\figcapup \caption{A nominal wavelet transform} \figcapdown
\label{fig:nom-wave}
\end{figure}

For example, given a nominal attribute $A$ with the hierarchy $H$ in
Figure~\ref{fig:nom-wave}, we impose on $A$ a total order $v_1 < v_2 <
\ldots < v_6$. As such, $A$ is transformed into an ordinal attribute
$A'$. Recall that for a nominal attribute, the range-count query
predicate ``$A \in S$'' has a special structure: $S$ either contains (i)
a leaf in the hierarchy of $A$ or (ii) all leaves in the subtree of an
internal node in the hierarchy of $A$. Therefore, $S$ is always a contiguous
range in the imposed total order of $A$.
With this transformation, we can apply {\em Privelet} with the HWT on any
one-dimensional nominal data. The noise variance bound thus obtained
is $O\big((\log_2 m)^3 / \epsilon^2\big)$ (see
Equation~\ref{eqn:ord-bound}).

While using the HWT is one possible solution, {\em Privelet} does not
stop here. We will show how to improve the above $O\big((\log_2 m)^3 /
\epsilon^2\big)$ bound to $O(h^2/\epsilon^2)$, where $h$ is the height
of the hierarchy $H$ on the nominal data. (Note that $h \le \log_2 m$
holds for any hierarchy where each internal node has at least two
children.) This improvement can result in a reduction of noise
variance by an order of magnitude or more in practice, as we will
discuss in Section \ref{sec:nom-compare}. The core of our solution
is a novel wavelet transform that creates a different
decomposition tree for generating wavelet coefficients.

A first thought for a different decomposition tree might be to use the hierarchy $H$, i.e., to generate wavelet coefficients from each internal node $N$ in $H$. Intuitively, if $N$ has only two children, then we may produce a coefficient $c$ from $N$ as in the HWT, i.e., we first compute the average value $a_1$ ($a_2$) of the leaves in the left (right) subtree of $N$, and then set $c = (a_1 + a_2)/2$. But what if $N$ has $k$ ($k > 2$) children? Should we generate one coefficient from each pair of
subtrees of $N$? But that will result in $k \choose 2$ coefficients, which is undesirable when $k$ is large. Is it possible to generate coefficients without relying on pairwise comparison of subtrees? We answer this question positively with the introduction of the {\em nominal wavelet transform}.

Given a one-dimensional frequency matrix $M$ and a hierarchy $H$ on
the entries in $M$, the nominal wavelet transform first constructs a
decomposition tree $R$ from $H$ by attaching a child node $N_c$ to
each leaf node $N$ in $H$. The value of $N_c$ is set to the value of
the entry in $M$ that corresponds to $N$. For example, given the
hierarchy $H$ in the left hand side of Figure~\ref{fig:nom-wave}, the
decomposition tree $R$ constructed from $H$ is as in right hand side
of the figure. In the second step, the nominal wavelet transform
computes a wavelet coefficient for each internal node of $R$ as
follows. The coefficient for the root node (referred to as the base
coefficient) is set to the sum of all leaves in its subtree, also
called the \emph{leaf-sum} of the node. For any other internal node,
its coefficient equals its leaf-sum minus the average leaf-sum of its
parent's children.

Given these nominal wavelet coefficients of $M$, each entry $v$
in $M$ can be reconstructed using the ancestors of $v$ in the
decomposition tree $R$. In particular, let $c_i$ be the ancestor of
$v$ in the $(i+1)$-th level of $R$, and $f_i$ be the fanout of $c_i$,
we have
\begin{equation} \label{eqn:nom-recon}
v = c_{h-1} + \sum_{i=0}^{h-2} \left( c_i \cdot \prod_{j=i}^{h-2} \frac{1}{f_j} \right),
\end{equation}
where $h$ is the height of the hierarchy $H$ on $M$.  To understand
Equation~\ref{eqn:nom-recon}, recall that $c_0$ equals the leaf-sum of
the root in $R$, while $c_k$ ($k \in [1, h-1]$) equals the leaf-sum of
$c_k$ minus the average leaf-sum of $c_{k-1}$'s
children. Thus, the leaf-sum of $c_1$ equals $c_1 + c_0 /
f_0$, the leaf-sum of $c_2$ equals $c_2 + (c_1 + c_0 / f_0) / f_1$,
and so on. It can be verified that the leaf-sum of $c_{h-1}$ equals
exactly the right hand side of Equation~\ref{eqn:nom-recon}. Since $v$
is the only leaf of $c_{h-1}$ in $R$, Equation~\ref{eqn:nom-recon}
holds.

\extraspacing
\begin{example}
Figure~\ref{fig:nom-wave} illustrates a one-dimensional frequency
matrix $M$, a hierarchy $H$ associated with $M$, and a nominal wavelet
transform on $M$. The base coefficient $c_0 = 30$ equals the sum of
all leaves in the decomposition tree. The coefficient $c_1$ equals
$3$, because (i) it has a leaf-sum $18$, (ii) the average leaf-sum of
its parent's children equals $15$, and (iii) $18 - 15 = 3$.

In the decomposition tree in Figure~\ref{fig:nom-wave}, the entry
$v_1$ has three ancestors, namely, $c_0$, $c_1$, and $c_3$, which are
at levels $1$, $2$, and $3$ of decomposition tree,
respectively. Furthermore, the fanout of $c_0$ and $c_1$ equal $2$ and
$3$, respectively. We have $v_1 = 9 = c_3 + c_0/2/3 + c_1 /3$. \done
\end{example}
\extraspacing

Note that our novel nominal wavelet transform is {\em over-complete}:
The number $m'$ of wavelet coefficients we generate is larger
than the number $m$ of entries in the input frequency matrix $M$. In
particular, $m' - m$ equals the number of internal nodes in the
hierarchy $H$ on $M$. The overhead incurred by such over-completeness,
however, is usually negligible, as the number of internal nodes in a
practical hierarchy $H$ is usually small compared to the number of
leaves in $H$.

\subsection{Instantiation of Privelet} \label{sec:nom-noise}

We are now ready to instantiate {\em Privelet} for one-dimensional
nominal data. Given a parameter $\lambda$ and a table $T$ with a
single nominal attribute, we first apply the nominal wavelet transform
on the frequency matrix $M$ of $T$. After that, we inject into each
nominal wavelet coefficient $c$ a Laplace noise with magnitude
$\lambda / \mathcal{W}_{Nom}(c)$. Specifically, $\mathcal{W}_{Nom}(c)
= 1$ if $c$ is the base coefficient, otherwise $\mathcal{W}_{Nom}(c) =
f/(2f-2)$, where $f$ is the fanout of $c$'s parent in the
decomposition tree.

Before converting the wavelet coefficients back to a noisy frequency
matrix, we refine the coefficients with a {\em mean subtraction}
procedure. In particular, we first divide all but the base
coefficients into disjoint {\em sibling groups}, such that each group
is a maximal set of noisy coefficients that have the same parent in
the decomposition tree. For example, the wavelet coefficients in
Figure~\ref{fig:nom-wave} can be divided into three sibling groups:
$\{c_1, c_2\}$, $\{c_3, c_4, c_5\}$, and $\{c_6, c_7, c_8\}$. After
that, for each sibling group, the coefficient mean is computed and
then subtracted from each coefficient in the group. Finally, we
reconstruct a noisy frequency matrix $M^*$ from the modified wavelet
coefficients (based on Equation~\ref{eqn:nom-recon}), and return $M^*$
as the output.

The mean subtraction procedure is essential to the utility guarantee
of {\em Privelet} that we will prove in
Section~\ref{sec:nom-noise}. The intuition is that, after
the mean subtraction procedure, all noisy coefficients in the same
sibling group sum up to zero; as such, for any non-root node $N$ in
the decomposition tree, the noisy coefficient corresponding to $N$
still equals the noisy leaf-sum of $N$ minus the average leaf-sum of
the children of $N$'s parent; in turn, this ensures that the
reconstruction of $M^*$ based on Equation~\ref{eqn:nom-recon} is
meaningful.

We emphasize that the mean subtraction procedure does not rely on any
information in $T$ or $M$; instead, it is performed based only on the
noisy wavelet coefficients. Therefore, the privacy guarantee of $M^*$
depends only on the noisy coefficients generated before the mean
subtraction procedure, as discussed in Section~\ref{sec:overview}.

\subsection{Theoretical Analysis} \label{sec:nom-bounds}

To prove the privacy guarantee of {\em Privelet} with the nominal
wavelet transform, we first establish the generalized sensitivity of
the nominal wavelet transform with respect to the weight function
$\mathcal{W}_{Nom}$ used in the noise injection step.

\extraspacing
\begin{lemma} \label{lmm:nom-amp}
The nominal wavelet transform has a generalized sensitivity of $h$
with respect to $\mathcal{W}_{Nom}$, where $h$ the height of the
hierarchy associated with the input frequency matrix.
\end{lemma}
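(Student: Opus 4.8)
The plan is to bound the weighted sum in Definition~\ref{def:over-amplify} directly, by tracking how a single-entry change propagates through the decomposition tree $R$. Since $M$ and $M'$ differ in exactly one entry, say the entry $v$ changes by some amount $\delta$ (so that $\|M-M'\|_1 = |\delta|$), and $v$ is a leaf of $R$, the only quantities affected are the leaf-sums of the nodes on the root-to-$v$ path, i.e.\ the internal-node ancestors $c_0,\dots,c_{k-1}$ of $v$; each such leaf-sum moves by $\delta$, while every other leaf-sum is unchanged. Here $c_0$ is the root and $c_{k-1}$ is the leaf of the hierarchy sitting directly above $v$, so a deepest leaf has $k=h$ and in general $k\le h$.

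First I would determine exactly which coefficients change. Writing the coefficient of a non-root node $N$ as $\mathrm{leafsum}(N)-\tfrac{1}{f}\,\mathrm{leafsum}(\mathrm{parent}(N))$, with $f$ the fanout of $\mathrm{parent}(N)$, and the base coefficient as the leaf-sum of the root, a coefficient changes if and only if $N$ or $\mathrm{parent}(N)$ is an ancestor of $v$. This splits the changed coefficients into three groups: (i) the base coefficient, which moves by $\delta$; (ii) each non-root ancestor $c_i$ ($i\ge 1$), whose own leaf-sum and whose parent's leaf-sum both move, so that it changes by $\delta(f_{i-1}-1)/f_{i-1}$, where $f_{i-1}$ is the fanout of its parent $c_{i-1}$; and (iii) each internal node that is a non-ancestor child of some ancestor $c_{i-1}$ (a ``sibling'' of $c_i$), whose coefficient changes by $-\delta/f_{i-1}$ through the averaging term alone. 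The one bookkeeping point to get right is that the children of the bottom ancestor $c_{k-1}$ are leaves of $R$ and carry no coefficients, so they contribute nothing.

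The final step is to insert $\mathcal{W}_{Nom}$ and watch the fanouts cancel. Using $\mathcal{W}_{Nom}(c_i)=f_{i-1}/(2f_{i-1}-2)$ (well defined since every internal node has $f\ge 2$), each non-root ancestor $c_i$ contributes $\tfrac12|\delta|$, while the $f_{i-1}-1$ siblings sharing the parent $c_{i-1}$ each contribute $\tfrac{1}{2(f_{i-1}-1)}|\delta|$, for a combined $\tfrac12|\delta|$. Hence each parent $c_0,\dots,c_{k-2}$ among the ancestors accounts for exactly $|\delta|$, and together with the base coefficient's $|\delta|$ the weighted sum equals $k\,|\delta|\le h\,|\delta|$, with equality for a deepest leaf. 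Dividing by $\|M-M'\|_1=|\delta|$ gives a generalized sensitivity of exactly $h$. I expect the only real obstacle to be the precise cancellation in group (iii): the whole argument hinges on the weight $f/(2f-2)$ being chosen so that the $f-1$ sibling terms, each scaled by $1/f$, sum to the same $\tfrac12|\delta|$ as the single ancestor term, independently of the fanout $f$; verifying this identity (and the $f\ge 2$ well-definedness it requires) is the crux.
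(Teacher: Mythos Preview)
Your proposal is correct and follows essentially the same approach as the paper's own proof: track how a single-entry perturbation $\delta$ propagates to the base coefficient and to each affected sibling group, and verify that the weight $\mathcal{W}_{Nom}(c)=f/(2f-2)$ makes every sibling group contribute exactly $|\delta|$ to the weighted sum, yielding $h\,|\delta|$ in total. Your treatment is slightly more explicit than the paper's in two places---you note that the children of the deepest ancestor $c_{k-1}$ are leaves of $R$ and carry no coefficients, and you allow $k\le h$ for entries not at maximal depth---but the argument is otherwise identical.
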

\begin{proof}
Suppose that we offset an arbitrary entry $v$ in the input frequency matrix $M$ by a constant $\delta$. Then, the base coefficient of $M$ will change by $\delta$. Meanwhile, for the coefficients at level $i$ ($i \in [2, h]$) of the decomposition tree, only the sibling group $G_i$ that contains an ancestor of $v$ will be affected. In particular, the ancestor of $v$ in $G_i$ will be offset by $\delta - \delta/|G_i|$, while the other coefficients in $G_i$ will change by $\delta/|G_i|$. Recall that $\mathcal{W}_{Nom}$ assigns a weight $1$ to the base coefficient, and a weight $1/(2 - 2/|G_i|)$ for all coefficients in $G_i$. Therefore, the generalized sensitivity of the nominal wavelet transform with respect to $\mathcal{W}_{Nom}$ should be
\begin{equation*}
1 + \sum_{i=2}^{h} \Big( \frac{1}{2 - 2/|G_i|} \cdot \big(1 - \frac{1}{|G_i|} + \frac{|G_i| - 1}{|G_i|} \big) \Big) \;\; = \;\; h.
\end{equation*}
\end{proof}
\extraspacing

By Lemmas \ref{lmm:over-privacy} and \ref{lmm:nom-amp}, given a
one-dimensional nominal table $T$ and a parameter $\lambda$, {\em
  Privelet} with the nominal wavelet transform ensures
$(2h/\lambda)$-differential privacy, where $h$ is the height of the
hierarchy associated with $T$.

\extraspacing
\begin{lemma} \label{lmm:nom-utility}
Let $C'$ be a set of nominal wavelet coefficients such that each $c'
\in C'$ contains independent noise with a variance at most
$\left(\sigma/\mathcal{W}_{Nom}(c')\right)^2$. Let $C^*$ be a set of
wavelet coefficients obtained by applying a mean subtraction procedure
on $C'$, and $M^*$ be the noisy frequency matrix reconstructed from
$C^*$. For any range-count query answered using $M^*$, the variance of
the noise in the answer is less than $4 \sigma^2$.
\end{lemma}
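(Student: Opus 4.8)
The plan is to express the answer to any range-count query as a (noisy) \emph{leaf-sum} of a single node of the decomposition tree $R$, and then to track how the injected noise propagates through the reconstruction in Equation~\ref{eqn:nom-recon}. Since a range-count predicate ``$A \in S$'' selects either one leaf or all leaves in the subtree of one hierarchy node, the exact query answer is always the true leaf-sum $L(N)$ of some node $N$, while the reported answer is the noisy leaf-sum $\hat{L}(N)$ obtained by summing the reconstructed entries of $M^*$ lying under $N$. I would therefore define the query error $\Delta(N) = \hat{L}(N) - L(N)$ and aim to bound $\mathrm{Var}(\Delta(N))$ uniformly over all $N$.

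First I would record the recursion induced by the mean-subtraction step, and this is exactly where that step is essential: because each sibling group sums to zero after subtraction, every non-root coefficient $c^*_N$ still equals the noisy leaf-sum of $N$ minus the average leaf-sum of its parent's children. Subtracting the noise-free version of this identity gives, for a node $N$ with parent $P$ of fanout $f$, the relation $\Delta(N) = \mathrm{noise}(c^*_N) + \Delta(P)/f$, with base case $\Delta(\mathrm{root}) = \eta_{c_0}$ (the base coefficient is never mean-subtracted). I would then note that $\mathrm{noise}(c^*_N)$ is a linear combination of the injected noises of $N$'s sibling group, whereas $\Delta(P)$ is determined by the injected noises of the sibling groups lying strictly above it; these are disjoint sets, so the two are independent and their variances add at every step of the recursion.

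Next I would compute the per-level variance. For a sibling group of size $g$, each member has $\mathcal{W}_{Nom} = g/(2g-2)$, so its injected noise has variance at most $4\sigma^2(1-1/g)^2 =: w^2$. Writing $\mathrm{noise}(c^*_N) = (1-1/g)\,\eta_N - \tfrac{1}{g}\sum_{c \neq N}\eta_c$ over the group and using independence, the identity $(1-1/g)^2 + (g-1)/g^2 = 1-1/g$ collapses the variance to $\mathrm{Var}(\mathrm{noise}(c^*_N)) \le w^2(1-1/g) = 4\sigma^2(1-1/g)^3$. Combined with the additivity above, the variances $V_i := \mathrm{Var}(\Delta(c_i))$ along the root-to-$N$ path obey $V_i \le 4\sigma^2(1-1/g_i)^3 + V_{i-1}/g_i^2$ with $V_0 \le \sigma^2$, where $g_i$ is the fanout of the parent of the level-$(i+1)$ ancestor.

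Finally I would close the recursion by induction on the level, proving $V_i < 4\sigma^2$ for every node. The base case is $V_0 \le \sigma^2 < 4\sigma^2$, and the inductive step, after substituting $V_{i-1} < 4\sigma^2$, reduces to the elementary inequality $(1-1/g)^3 + 1/g^2 \le 1$. Setting $t = 1/g \in (0,1/2]$ (every internal node has at least two children, which is also what makes $\mathcal{W}_{Nom}$ well defined), this is equivalent to $(t-1)(t-3) \ge 0$, which holds throughout the range; hence $V_i < 4\sigma^2$. Applying this to the query node $N$ gives the claim. I expect the main obstacle to be the second step: verifying that the mean-subtraction procedure makes the summed reconstruction telescope exactly into the leaf-sum recursion, so that no coefficient outside the root-to-$N$ path contributes, and then bookkeeping the within-group noise coupling correctly; once that is in place, the variance recursion and the one-line inequality that pins down the constant $4$ follow routinely.
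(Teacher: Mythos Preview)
Your proposal is correct and follows essentially the same approach as the paper: both compute the post--mean-subtraction variance $4\sigma^2(1-1/g)^3$ for each non-base coefficient, argue that the query answer depends only on the coefficients along the root-to-$N$ path (via the telescoping that mean subtraction enables), and then combine the per-level variances using independence across sibling groups. Your recursive formulation $V_i \le 4\sigma^2(1-1/g_i)^3 + V_{i-1}/g_i^2$ together with the explicit inequality $(1-t)^3 + t^2 \le 1$ in fact supplies the verification that the paper leaves as ``it can be verified,'' so your write-up is, if anything, slightly more complete on that point.
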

\begin{proof}
See Appendix~\ref{sec:proof-nom-utility}.
\end{proof}
\extraspacing

By Lemmas \ref{lmm:nom-amp} and \ref{lmm:nom-utility}, when achieving
$\epsilon$-differential privacy, {\em Privelet} with the nominal
wavelet transform guarantees that each range-count query result has a
noise variance at most
\begin{eqnarray} \label{eqn:nom-bound}
4 \cdot 2 \cdot (2h)^2 / \epsilon^2 &=& O\big(h^2/\epsilon^2\big).
\end{eqnarray}
As $h \le \log_2 m$ holds in practice, the above
$O\big(h^2/\epsilon^2\big)$ bound significantly improves upon the
$O\big(m / \epsilon^2\big)$ bound given by previous work.

{\em Privelet} with the nominal wavelet transform runs in $O(n + m)$
time. In particular, computing $M$ from $T$ takes $O(n)$ time; the
nominal wavelet transform on $M$ has an $O(m)$ complexity. The noise
injection step incurs $O(m)$ overhead. Finally, with a breath-first
traversal of the decomposition tree $R$, we can complete both the mean
subtraction procedure and the reconstruction of the noisy frequency
matrix. Such a breath-first traversal takes $O(m)$ time under the
realistic assumption that the number of internal nodes in $R$ is
$O(m)$.

\subsection{Nominal Wavelet Transform vs.\ Haar Wavelet Transform} \label{sec:nom-compare}

As discussed in Section~\ref{sec:nom-wave}, {\em Privelet} with the
HWT can provide an $O\big((\log_2 m)^3 / \epsilon^2 \big)$ noise
variance bound for one-dimensional nominal data by imposing a total
order on the nominal domain. Asymptotically, this bound is inferior to
the $O(h^2 / \epsilon^2)$ bound in Equation~\ref{eqn:nom-bound}, but
how different are they in practice? To answer this question, let us
consider the nominal attribute {\em Occupation} in the Brazil census
dataset used in our experiments (see Section~\ref{sec:exp} for
details). It has a domain with $m=512$ leaves and a hierarchy with $3$
levels. Suppose that we apply {\em Privelet} with the one-dimensional
HWT on a dataset that contains {\em Occupation} as the only
attribute. Then, by Equation~\ref{eqn:ord-bound}, we can achieve a
noise variance bound of
\begin{eqnarray*}
(2 + \log_2 m) \cdot (2 + 2 \log_2 m)^2 / \epsilon^2 \; = \; 4400/\epsilon^2.
\end{eqnarray*}
In contrast, if we use {\em Privelet} with the nominal wavelet transform, the resulting noise variance is bounded by
\begin{eqnarray*}
4 \cdot 2 \cdot (2h)^2 / \epsilon^2 &=& 288/\epsilon^2,
\end{eqnarray*}
i.e., we can obtain a $15$-fold reduction in noise variance. Due to the superiority of the nominal wavelet transform over
the straightforward HWT, in the remainder of paper we will always use the former for nominal attributes.

\section{Multi-Dimensional Privelet} \label{sec:multi}

This section extends {\em Privelet} for multi-dimensional
data. Section~\ref{sec:multi-multi} presents our multi-dimensional
wavelet transform, which serves as the basis of the new instantiation
of {\em Privelet} in
Section~\ref{sec:multi-noise}. Section~\ref{sec:multi-bounds} analyzes
properties of the new instantiation, while
Section~\ref{sec:multi-hybrid} further improves its utility guarantee.

\subsection{Multi-Dimensional Wavelet Transform} \label{sec:multi-multi}

The one-dimensional wavelet transforms can be extended to
multi-dimensional data using {\em standard decomposition}
\cite{sds96}, which works by applying the one-dimensional wavelet
transforms along each dimension of the data in turn. More
specifically, given a frequency matrix $M$ with $d$ dimensions, we
first divide the entries in $M$ into one-dimensional vectors, such
that each vector contains a maximal set of entries that have identical
coordinates on all but the first dimensions. For each vector $V$, we
convert it into a set $S$ of wavelet coefficients using the
one-dimensional Haar or nominal wavelet transform, depending on
whether the first dimension of $M$ is ordinal or nominal. After that,
we store the coefficients in $S$ in a vector $V'$, where the
coefficients are sorted based on a level-order traversal of the
decomposition tree (the base coefficient always ranks first). The
$i$-th ($i \in [1, S]$) coefficient in $V'$ is assigned $d$
coordinates $\langle i, x_2, x_3, \ldots, x_d \rangle$, where $x_j$ is
the $j$-th coordinate of the entries in $V$ ($j \in [2, d]$; recall
that the $j$-th coordinates of these entries are identical). After
that, we organize all wavelet coefficients into a new $d$-dimensional
matrix $C_1$ according to their coordinates.

In the second step, we treat $C_1$ as the input data, and apply a
one-dimensional wavelet transform along the second dimension of $C_1$
to produce a matrix $C_2$, in a manner similar to the transformation
from $M$ to $C_1$. In general, the matrix $C_i$ generated in the
$i$-th step will be used as the input to the $(i+1)$-th step. In turn,
the $(i+1)$-th step will apply a one-dimensional wavelet transform
along the $(i+1)$-th dimension of $C_i$, and will generate a new
matrix $C_{i+1}$. We refer to $C_i$ as the {\em step-$i$
  matrix}. After all $d$ dimensions are processed, we stop and return
$C_d$ as the result. We refer to the transformation from $M$ to $C_d$
as an Haar-nominal (HN) wavelet transform. Observe that $C_d$ can be
easily converted back to the original matrix $M$, by applying inverse
wavelet transforms along dimensions $d, d-1, \ldots, 1$ in turn.

\begin{figure}[t]
\centering
\includegraphics[width=80mm]{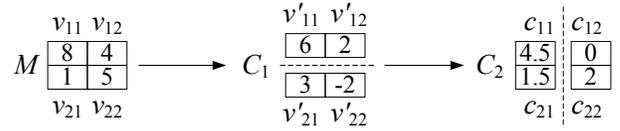}
\figcapup \caption{Multi-Dimensional Wavelet Transform} \figcapdown
\label{fig:multi-wave}
\end{figure}

\extraspacing
\begin{example}
Figure~\ref{fig:multi-wave} illustrates an HN wavelet transform on a
matrix $M$ with two ordinal dimensions. In the first step of the
transform, $M$ is vertically divided into two vectors $\langle v_{11},
v_{12} \rangle$ and $\langle v_{21}, v_{22} \rangle$. These two
vectors are then converted into two new vectors $\langle v'_{11},
v'_{12} \rangle$ and $\langle v'_{21}, v'_{22} \rangle$ using the
one-dimensional HWT. Note that $v'_{11}$ and $v'_{21}$ are the base
coefficients. The new matrix $C_1$ is the step-$1$ matrix.

Next, $C_1$ is horizontally partitioned into two vectors $\langle
v'_{11}, v'_{21} \rangle$ and $\langle v'_{12}, v'_{22} \rangle$. We
apply the HWT on them, and generate two coefficient vectors $\langle
c_{11}, c_{21} \rangle$ and $\langle c_{12}, c_{22} \rangle$, with
$c_{11}$ and $c_{12}$ being the base coefficients. The matrix $C_2$ is
returned as the final result. \done
\end{example}

\subsection{Instantiation of Privelet} \label{sec:multi-noise}

Given a $d$-dimensional table $T$ and a parameter $\lambda$, {\em
  Privelet} first performs the HN wavelet transform on the frequency
matrix $M$ of $T$. Then, it adds a Laplace noise with magnitude
$\lambda / \mathcal{W}_{HN}(c)$ to each coefficient $c$, where
$\mathcal{W}_{HN}$ is a weight function that we will define shortly. Next,
it reconstructs a noisy frequency matrix $M^*$ using the noisy wavelet
coefficients by inverting the one-dimensional wavelet transforms on
dimensions $d, d-1, \ldots, 1$ in turn.\footnote{If the $i$-the
  dimension is nominal, then, whenever we convert a vector $V'$ in the
  step-$i$ matrix back to a vector $V$ in the step-$(i-1)$ matrix, we
  will apply the mean substraction procedure before the reconstruction
  of $V$.} Finally, it terminates by returning $M^*$.

The weight function $\mathcal{W}_{HN}$ is decided by the
one-dimensional wavelet transforms adopted in the HN wavelet
transform. Let $\mathcal{W}_i$ be the weight function associated with
the transform used to compute the step-$i$ ($i \in [1, d]$) matrix,
i.e., $\mathcal{W}_i = \mathcal{W}_{Haar}$ if the $i$-th dimension of
$M$ is ordinal, otherwise $\mathcal{W}_i = \mathcal{W}_{Nom}$. We
determine the weight $\mathcal{W}_{HN}(c)$ for each HN wavelet
coefficient $c$ as follows.  First, during the construction of the
step-$1$ matrix $C_1$, whenever we generate a coefficient vector $V'$,
we assign to each $c' \in V'$ a weight $\mathcal{W}_1(c')$. For
instance, if the first dimension $A_1$ of $M$ is nominal, then
$\mathcal{W}_1(c') = 1$ if $c$ is the base coefficient, otherwise
$\mathcal{W}_1(c') = f/(2f - 2)$, where $f$ is the fanout of the
parent of $c'$ in the decomposition tree. Due to the way we arrange
the coefficients in $C_1$, if two coefficients in $C_1$ have the same
coordinates on the first dimension, they must have identical weights.

Now consider the second step of the HN wavelet transform. In this
step, we first partition $C_1$ into vectors along the second
dimension, and then apply one-dimensional wavelet transforms to convert
each vector $V''$ to into a new coefficient vector $V^*$. Observe that
all coefficients in $V''$ should have the same weight, since they have
identical coordinates on the first dimension. We set the weight of
each $c^* \in V^*$ to be $\mathcal{W}_2(c^*)$ times the weight shared
by the coefficients in $V''$.

In general, in the $i$-th step of the HN wavelet transform, whenever
we generate a coefficient $c$ from a vector $V \subset C_{i-1}$, we
always set the weight of $c$ to the product of $\mathcal{W}_i(c)$ and
the weight shared by the coefficients in $V$ --- all coefficients in
$V$ are guaranteed to have the same weight, because of the way we
arrange the entries in $C_{i-1}$. The weight function
$\mathcal{W}_{HN}$ for the HN wavelet transform is defined as a
function that maps each coefficient in $C_d$ to its weight computed as
above. For convenience, for each coefficient $c \in C_i$ ($i \in [1,
  d-1]$), we also use $\mathcal{W}_{HN}(c)$ to denote the weight of
$c$ in $C_i$.

\extraspacing
\begin{example}
Consider the HN wavelet transform in Figure~\ref{fig:multi-wave}. Both
dimensions of the frequency matrix $M$ are nominal, and hence, the
weight function for both dimensions is $\mathcal{W}_{Haar}$. In the
step-$1$ matrix $C_1$, the weights of the coefficients $v'_{11}$ and
$v'_{21}$ equal $1/2$, because (i) they are the base coefficients in
the wavelet transforms on $\langle v_{11}, v_{12} \rangle$ and
$\langle v_{21}, v_{21}\rangle$, respectively, and (ii)
$\mathcal{W}_{Haar}$ assigns a weight $1/2$ to the base coefficient
whenever the input vector contains only two entries.

Now consider the coefficient $c_{11}$ in the step-$2$ matrix $C_2$. It
is generated from the HWT on $\langle v'_{11},
v'_{21} \rangle$, where both $v'_{11}$ and $v'_{12}$ have a weight
$1/2$. In addition, as $c_{11}$ is the base coefficient,
$\mathcal{W}_{Haar}(c_{11}) = 1/2$. Consequently,
$\mathcal{W}_{HN}(c_{11}) = 1/2 \cdot \mathcal{W}_{Haar}(c_{11}) =
1/4$. \done
\end{example}

\subsection{Theoretical Analysis} \label{sec:multi-bounds}

As {\em Privelet} with the HN wavelet transform is essentially a
composition of the solutions in Sections \ref{sec:ord-prive} and
\ref{sec:nom-noise}, we can prove its privacy (utility) guarantee by
incorporating Lemmas \ref{lmm:ord-amp} and \ref{lmm:nom-amp}
(\ref{lmm:ord-utility} and \ref{lmm:nom-utility}) with an induction
argument on the dataset dimensionality $d$. Let us define a function
$\mathcal{P}$ that takes as input any attribute $A$, such that
\begin{equation*}
\mathcal{P}(A) = \left\{
\begin{array}{ll}
1 + \log_2 |A| & \textrm{if $A$ is ordinal} \\
\textrm{the height $h$ of $A$'s hierarchy } & \textrm{otherwise}
\end{array}
\right.
\end{equation*}
Similarly, let $\mathcal{H}$ be a function such that
\begin{equation*}
\mathcal{H}(A) = \left\{
\begin{array}{ll}
(2 + \log_2 |A|)/2 & \textrm{if $A$ is ordinal} \\
4 & \textrm{otherwise}
\end{array}
\right.
\end{equation*}
We have the following theorems that show (i) the generalized sensitivity of the HN wavelet transform (Theorem
\ref{thrm:multi-amp}) and (ii) the noise variance bound provided by {\em Privelet} with the HN wavelet transform
(Theorem \ref{thrm:multi-utility}).

\extraspacing
\begin{theorem} \label{thrm:multi-amp}
The HN wavelet transform on a $d$-dimensional matrix $M$ has a
generalized sensitivity $\prod_{i=1}^d \mathcal{P}(A_i)$ with respect
to $\mathcal{W}_{HN}$, where $A_i$ is the $i$-th dimension of
$M$.
\end{theorem}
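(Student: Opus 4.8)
The plan is to prove Theorem~\ref{thrm:multi-amp} by induction on the dimensionality $d$, exploiting the fact that the HN wavelet transform processes one dimension at a time and that the weight function $\mathcal{W}_{HN}$ is defined multiplicatively across steps. The base case $d=1$ is exactly Lemma~\ref{lmm:ord-amp} (if $A_1$ is ordinal) or Lemma~\ref{lmm:nom-amp} (if $A_1$ is nominal), since $\mathcal{P}(A_1)$ equals $1 + \log_2|A_1|$ or $h$ accordingly. For the inductive step, I would fix an arbitrary entry in $M$, perturb it by $\delta$, and track how this single perturbation propagates through the $d$ successive one-dimensional transforms, bounding the weighted $L_1$ change in the final coefficients of $C_d$.

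The key observation I would establish is that a change in one entry of $M$ behaves cleanly under standard decomposition. After the first one-dimensional transform (along dimension $1$), the perturbed entry affects only those step-$1$ coefficients lying in the same one-dimensional vector $V$ (same coordinates on dimensions $2,\ldots,d$), and by the single-dimension analysis, the weighted sum of the magnitudes of these changes is at most $\mathcal{P}(A_1) \cdot |\delta|$, where the weights here are exactly the $\mathcal{W}_1$-weights assigned in step~$1$. The crucial structural point is that these changed coefficients in $C_1$ all differ in only one coordinate (the first), so relative to the remaining $(d-1)$-dimensional transform they still constitute perturbations localized along dimension~$1$. I would then argue that the combined effect, when pushed through dimensions $2,\ldots,d$, multiplies the bound by $\prod_{i=2}^d \mathcal{P}(A_i)$, because the $\mathcal{W}_{HN}$-weight of each final coefficient factors as the product of per-step weights, and the per-step weighted sensitivity along each remaining dimension is independently bounded by $\mathcal{P}(A_i)$.

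The cleanest way to carry this out is to view the HN transform as the composition of the $d$-th step (a single one-dimensional transform along dimension $d$) with the HN transform on the first $d-1$ dimensions. Formally, I would set $M' = C_{d-1}$ and observe that $C_d$ is obtained from $C_{d-1}$ by a one-dimensional transform along dimension $d$, while $C_{d-1}$ is obtained from $M$ by the $(d-1)$-dimensional HN transform. The multiplicative definition of $\mathcal{W}_{HN}$ means that $\mathcal{W}_{HN}(c) = \mathcal{W}_d(c) \cdot w$, where $w$ is the weight shared by the generating vector in $C_{d-1}$. Writing the target weighted sum over $C_d$ and grouping by the generating vectors, each inner group contributes a factor bounded by $\mathcal{P}(A_d)$ (by the single-dimension lemma applied along dimension~$d$), and the outer sum over the induced perturbation of $C_{d-1}$ is bounded by $\prod_{i=1}^{d-1}\mathcal{P}(A_i) \cdot |\delta|$ by the inductive hypothesis. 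Multiplying gives $\prod_{i=1}^{d}\mathcal{P}(A_i) \cdot |\delta|$, and dividing by $\|M-M'\|_1 = |\delta|$ yields the claimed generalized sensitivity.

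The main obstacle I anticipate is justifying the interchange of the per-step bounds rigorously, i.e., verifying that the perturbation of $C_{d-1}$ induced by perturbing a single entry of $M$ is itself a \emph{single-coordinate-localized} change along dimension~$d$ suitable for applying the final-step lemma group-by-group, rather than a diffuse change spread across many vectors. Concretely, I must confirm that the changed coefficients in $C_{d-1}$ all share the same first $d-1$ coordinates except possibly along dimension $d-1$, so that when partitioned into vectors along dimension $d$ each affected vector receives a perturbation in exactly one position --- which is precisely what the single-dimension generalized sensitivity analysis assumes. Handling this bookkeeping cleanly, and ensuring the weight factorization lines up exactly with how the per-dimension lemmas distribute the weights, is where the care is needed; the arithmetic itself is routine once the induction is set up.
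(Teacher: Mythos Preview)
Your plan is essentially the paper's proof: induction on the number of dimensions, base case from Lemmas~\ref{lmm:ord-amp} and~\ref{lmm:nom-amp}, and the inductive step driven by the multiplicative structure of $\mathcal{W}_{HN}$.

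The obstacle you flag in your last paragraph is not a real obstacle. The generalized-sensitivity bound for a one-dimensional transform extends from single-entry perturbations to \emph{arbitrary} perturbations by the triangle inequality: if $C_{d-1}$ and $C'_{d-1}$ differ in many entries, replace the entries one at a time and sum the per-replacement bounds. The paper executes the inductive step exactly this way---it introduces intermediate matrices $C_{l+1}^{(\alpha)}$ obtained after replacing the first $\alpha$ entries of $C_l$, applies the one-dimensional lemma to each single replacement, and telescopes---so no localization property of the change in $C_{d-1}$ is ever needed. Incidentally, your description of that property is slightly off (the changed entries of $C_{d-1}$ share the \emph{last} coordinate $x_d$, not the first $d-1$ coordinates), though your conclusion that each vector along dimension $d$ sees at most one changed entry is correct; it is simply unnecessary once you adopt the entry-by-entry argument.
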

\begin{proof}
See Appendix~\ref{sec:proof-multi-amp}.
\end{proof}

\extraspacing
\begin{theorem} \label{thrm:multi-utility}
Let $C^*_d$ be a $d$-dimensional HN wavelet coefficient matrix, such
that each coefficient $c^* \in C^*_d$ has a noise with a variance at
most $\big(\sigma / \mathcal{W}_{HN}(c^*)\big)^2$. Let $M^*$ be the
noisy frequency matrix reconstructed from $C^*_d$, and $A_i$ ($i \in
[1, d]$) be the $i$-th dimension of $M^*$. For any range-count query
answered using $M^*$, the noise in the query result has a variance at
most $\sigma^2 \cdot \prod_{i=1}^d \mathcal{H}(A_i)$.
\end{theorem}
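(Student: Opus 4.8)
The plan is to prove the bound by induction on the dimensionality $d$, peeling off the first dimension at each step and reducing to the one-dimensional utility guarantees already established in Lemmas~\ref{lmm:ord-utility} and~\ref{lmm:nom-utility}. The base case $d=1$ is exactly one of those two lemmas, according to whether $A_1$ is ordinal or nominal, and in both cases the bound is $\sigma^2\cdot\mathcal{H}(A_1)$. Before starting I would restate the one-dimensional lemmas in the form the induction actually consumes. Reconstruction along a single dimension, together with the data-independent mean-subtraction step in the nominal case, is a fixed linear map, so answering a one-dimensional range query with predicate $S$ amounts to applying a fixed coefficient vector $q$ to the (independently noised) coefficients $c^*$: the answer is $\sum_c q[c]\,c^*$ and its noise variance is $\sum_c q[c]^2\,\mathrm{Var}(c^*)$. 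Setting each $\mathrm{Var}(c^*)$ to its maximal allowed value $(\sigma/\mathcal{W}_\star(c))^2$ and comparing with the lemma's conclusion shows $\sum_c q[c]^2/\mathcal{W}_\star(c)^2 \le \mathcal{H}(A)$, where $\mathcal{W}_\star$ is $\mathcal{W}_{Haar}$ or $\mathcal{W}_{Nom}$ as appropriate. This ``weighted sum of squared query coefficients'' bound is a purely algebraic statement about the transform, mean subtraction, and query, and needs no independence assumption.

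For the inductive step I would write the range-count answer as a nested sum. Since reconstruction of $M^*$ from $C^*_d$ inverts dimensions $d,d-1,\ldots,1$ in turn, $M^*$ is obtained from the noisy step-$1$ matrix $C^*_1$ by inverting the one-dimensional transform along dimension $1$. Fixing coordinates $(x_2,\ldots,x_d)$, the fiber $M^*[\,\cdot\,,x_2,\ldots,x_d]$ is the one-dimensional reconstruction of $C^*_1[\,\cdot\,,x_2,\ldots,x_d]$, so by the reformulation the inner range sum over $S_1$ equals $\sum_{c_1} q^{(1)}[c_1]\,C^*_1[c_1,x_2,\ldots,x_d]$, where $q^{(1)}$ is the dimension-$1$ query-coefficient vector and is the same for every fiber. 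Summing over the remaining box $S_2\times\cdots\times S_d$ and exchanging the order of summation gives
\[
  Q(M^*) \;=\; \sum_{c_1} q^{(1)}[c_1]\cdot Y_{c_1},
  \qquad
  Y_{c_1} := \!\!\sum_{(x_2,\ldots,x_d)\in S_2\times\cdots\times S_d}\!\! C^*_1[c_1,x_2,\ldots,x_d],
\]
so each $Y_{c_1}$ is a $(d-1)$-dimensional range-count query answered on the dimension-$1$ slice $C^*_1[c_1,\cdot]$ of the step-$1$ matrix.

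The variance computation then rests on two structural facts. First, because the HN transform uses standard decomposition, the transforms along dimensions $2,\ldots,d$ act independently within each dimension-$1$ slice; hence $C^*_1[c_1,\cdot]$ is reconstructed solely from the slice $C^*_d[c_1,\cdot]$, whose injected noises are independent of those in all other slices, so the $Y_{c_1}$ are independent and $\mathrm{Var}(Q(M^*)) = \sum_{c_1} q^{(1)}[c_1]^2\,\mathrm{Var}(Y_{c_1})$. Second, by its recursive definition the weight factorizes as $\mathcal{W}_{HN}(c_1,c_2,\ldots,c_d)=\mathcal{W}_1(c_1)\cdot\mathcal{W}^{(d-1)}_{HN}(c_2,\ldots,c_d)$, where $\mathcal{W}^{(d-1)}_{HN}$ is the weight function of the $(d-1)$-dimensional HN transform on dimensions $2,\ldots,d$. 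Thus within slice $c_1$ every coefficient has variance at most $\big((\sigma/\mathcal{W}_1(c_1))/\mathcal{W}^{(d-1)}_{HN}\big)^2$, and the inductive hypothesis applied with noise parameter $\sigma/\mathcal{W}_1(c_1)$ gives $\mathrm{Var}(Y_{c_1}) \le (\sigma/\mathcal{W}_1(c_1))^2\prod_{i=2}^d \mathcal{H}(A_i)$. Substituting, $\mathrm{Var}(Q(M^*)) \le \sigma^2\big(\prod_{i=2}^d\mathcal{H}(A_i)\big)\sum_{c_1} q^{(1)}[c_1]^2/\mathcal{W}_1(c_1)^2$, and the trailing sum is at most $\mathcal{H}(A_1)$ by the one-dimensional reformulation, completing the induction.

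The main obstacle is the bookkeeping of independence and weights through the nested reconstruction rather than any single hard inequality. The crux is verifying that standard decomposition really keeps the dimension-$1$ slices separate throughout steps $2,\ldots,d$ — this is what makes the $Y_{c_1}$ independent and lets the inductive hypothesis apply slice-by-slice — together with the rescaling by which the factor $\mathcal{W}_1(c_1)$ peeled off the weight is absorbed into the inner noise parameter $\sigma/\mathcal{W}_1(c_1)$. One should also check that the mean-subtraction step for nominal dimensions is data-independent and linear, so that it folds into the fixed reconstruction map defining $q^{(1)}$ and never disturbs the independence of the raw injected noise; this is precisely what lets Lemma~\ref{lmm:nom-utility} supply the factor $\mathcal{H}(A_1)=4$ in the nominal case.
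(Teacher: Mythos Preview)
Your proof is correct and follows the same inductive scheme as the paper, with two organisational differences worth noting. First, the paper peels off the \emph{last} dimension: it inverts the step from $C^*_{d}$ to $C^*_{d-1}$, forms $C'_{d-1}=\sum_{a\in S_d}C^*_{d-1}[a]$ via the linearity of the HN transform (Proposition~\ref{prop:multi-linear}), shows that each entry of $C'_{d-1}$ has independent noise of variance at most $\mathcal{H}(A_d)\cdot(\sigma/\mathcal{W}_{HN})^2$ by applying the one-dimensional lemma along each dimension-$d$ fiber, and then invokes the $(d-1)$-dimensional hypothesis on $C'_{d-1}$. You instead peel off the \emph{first} dimension and slice $C^*_d$ along it, applying the $(d-1)$-dimensional hypothesis slice by slice and then the one-dimensional bound at the end. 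Second, you make the one-dimensional lemmas do their work through the algebraic reformulation $\sum_c q[c]^2/\mathcal{W}_\star(c)^2\le\mathcal{H}(A)$, whereas the paper keeps them in their original ``variance in $\Rightarrow$ variance out'' form. Your ordering makes the independence of the $Y_{c_1}$ essentially immediate (disjoint blocks of independently noised coefficients in $C^*_d$), while the paper must argue that distinct entries of $C'_{d-1}$ come from disjoint dimension-$d$ fibers; conversely, the paper's route avoids your preliminary reformulation of Lemmas~\ref{lmm:ord-utility} and~\ref{lmm:nom-utility}. Both are valid unrollings of the same tensor-product variance bound.
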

\begin{proof}
See Appendix~\ref{sec:proof-multi-utility}.
\end{proof}
\extraspacing

By Theorem~\ref{thrm:multi-amp}, to achieve $\epsilon$-differential
privacy, {\em Privelet} with the HN wavelet transform should be
applied with $\lambda = 2 / \epsilon \cdot \prod_{i=1}^d
\mathcal{P}(A_i)$; in that case, by Theorem~\ref{thrm:multi-utility},
        {\em Privelet} ensures that any range-count query result has a
        noise variance of at most
\begin{eqnarray*} 
2 \bigg(2/\epsilon \cdot \prod_{i=1}^d \mathcal{P}(A_i)\bigg)^2 \cdot \prod_{i=1}^d \mathcal{H}(A_i) \; = \; O\left(\log^{O(1)} m / \epsilon^2 \right),\!
\end{eqnarray*}
since $\mathcal{P}(A_i)$ and $\mathcal{H}(A_i)$ are logarithmic in $m$.

{\em Privelet} with the HN wavelet transform has an $O(n + m)$ time
complexity. This is because (i) computing the frequency matrix $M$
takes $O(n + m)$ time, (ii) each one-dimensional wavelet transform on
$M$ has $O(m)$ complexity, and (iii) adding Laplace noise to the
wavelet coefficients incurs $O(m)$ overhead.

\subsection{A Hybrid Solution} \label{sec:multi-hybrid}

We have shown that {\em Privelet} outperforms Dwork et al.'s method
{\em asymptotically} in terms of the accuracy of range-count
queries. In practice, however, {\em Privelet} can be inferior to Dwork
et al.'s method, when the input table $T$ contains attributes with
small domains. For instance, if $T$ has a single ordinal attribute $A$
with domain size $|A| = 16$, then {\em Privelet} provides a noise
variance bound of
\begin{eqnarray*}
2 \cdot \big(2 \cdot \mathcal{P}(A) / \epsilon \big)^2 \cdot \mathcal{H}(A) &=& 600 / \epsilon^2,
\end{eqnarray*}
as analyzed in Section~\ref{sec:multi-bounds}. In contrast, Dwork et
al.'s method incurs a noise variance of at most
\begin{eqnarray*}
2 \cdot \big(2 \cdot |A| / \epsilon )^2  &=& 128 / \epsilon^2,
\end{eqnarray*}
as shown in Section~\ref{sec:prelim-basic}. This demonstrates the fact
that, Dwork et al.'s method is more favorable for small-domain
attributes, while {\em Privelet} is more suitable for attributes whose
domains are large. How can we combine the advantages of both solutions
to handle datasets that contain both large- and small-domain
attributes?

\begin{figure} [t]
\centering
\begin{small}
\begin{tabbing}
\hspace{4mm} \= \hspace{2mm} \= \hspace{2mm} \= \hspace{2mm} \=
\kill
{\bf Algorithm} {\em Privelet$^+$} ($T$, $\lambda$, $S_A$)  \\
1.\> map $T$ to its frequency matrix $M$ \\
2.\> divide $M$ into sub-matrices along the dimensions specified in $S_A$ \\
3.\> for each sub-matrix \\
4.\>\> compute the HN wavelet coefficients of the sub-matrix \\
5.\>\> add to each coefficient $c$ a Laplace noise with magnitude \\
\>\>  $\lambda / \mathcal{W}_{HN}(c)$ \\
6.\>\> convert the noisy coefficients back to a noisy sub-matrix \\
7.\> assemble the noisy sub-matrices into a frequency matrix $M^*$ \\
8.\> return $M^*$
\end{tabbing}
\end{small}
\codecapup  \caption{The {\em Privelet$^+$} algorithm} \codecapdown
\label{code:multi-hybrid}
\end{figure}

We answer the above question with the {\em Privelet$^+$} algorithm illustrated in Figure~\ref{code:multi-hybrid}. The algorithm takes as an input a table $T$, a parameter $\lambda$, and a subset $S_A$ of the attributes in $T$. It first maps $T$ to its frequency matrix $M$. Then, it divides $M$ into sub-matrices, such that each sub-matrix contains the entries in $M$ that have the same coordinates on each dimension specified in $S_A$. For instance, given the frequency matrix in Table~\ref{tbl:intro-freq}, if $S_A$ contains only the ``Has Diabetes?'' dimension, then the matrix would be split into two one-dimensional sub-matrices, each of which contains a column in Table~\ref{tbl:intro-freq}. In general, if $M$ has $d$ dimensions, then each sub-matrix should have $d - |S_A|$ dimensions.

After that, each sub-matrix is converted into wavelet coefficients
using a $(d-|S_A|)$-dimensional HN wavelet transform. {\em
  Privelet$^+$} injects into each coefficient $c$ a Laplace noise with
magnitude $\lambda / \mathcal{W}_{HN}(c)$, and then maps the noisy
coefficients back to a noisy sub-matrix. In other words, {\em
  Privelet$^+$} processes each sub-matrix in the same way as {\em
  Privelet} handles a $(d-|S_A|)$-dimensional frequency
matrix. Finally, {\em Privelet$^+$} puts together all noisy
sub-matrices to obtain a $d$-dimensional noisy frequency matrix $M^*$,
and then terminates by returning $M^*$.

Observe that {\em Privelet$^+$} captures {\em Privelet} as a special
case where $S_A = \emptyset$. Compared to {\em Privelet}, it provides
the flexibility of not applying wavelet transforms on the attributes
in $S_A$. Intuitively, this enables us to achieve better data utility
by putting in $S_A$ the attributes with small domains, since those
attributes cannot be handled well with {\em Privelet}. Our intuition
is formalized in Corollary~\ref{coro:multi-hybrid}, which follows from
Theorems \ref{thrm:multi-amp} and \ref{thrm:multi-utility}.

\extraspacing
\begin{corollary} \label{coro:multi-hybrid}
Let $T$ be a table that contains a set $S$ of attributes. Given $T$, a subset $S_A$ of $S$, and a parameter $\lambda$, {\em Privelet$^+$} achieves $\epsilon$-differential privacy with $\epsilon = 2/\lambda \cdot \prod_{A \in S - S_A} \mathcal{P}(A)$. In addition, it ensures that any range-count query result has a noise variance at most $\left(\prod_{A \in S_A} |A| \right) \cdot \prod_{A \in S - S_A} \mathcal{H}(A)$. \done
\end{corollary}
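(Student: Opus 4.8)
The plan is to treat \emph{Privelet}$^+$ as nothing more than a collection of independent invocations of \emph{Privelet} with the HN wavelet transform --- one per sub-matrix --- and then to lift the per-sub-matrix guarantees of Theorems~\ref{thrm:multi-amp} and~\ref{thrm:multi-utility} to the full matrix. There are two claims to establish: the privacy bound $\epsilon = 2/\lambda \cdot \prod_{A \in S - S_A}\mathcal{P}(A)$ and the stated bound on the variance of a range-count answer. The starting observation is that each sub-matrix is a $(d-|S_A|)$-dimensional frequency matrix whose dimensions are exactly the attributes in $S - S_A$ (the $S_A$-coordinates being held fixed), so Theorems~\ref{thrm:multi-amp} and~\ref{thrm:multi-utility} apply to it verbatim with the products ranging over $A \in S - S_A$.

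For the privacy half, I would regard the entire collection of HN wavelet coefficients produced across all sub-matrices as a single function set $F$, equipped with the weight function $\mathcal{W}_{HN}$ inherited from each sub-matrix, and compute its generalized sensitivity in the sense of Definition~\ref{def:over-amplify}. The crux is that the sub-matrices \emph{partition} the entries of $M$: modifying a single entry of $M$ alters exactly one sub-matrix, and alters that sub-matrix in a single entry, while leaving every other sub-matrix untouched. Hence the weighted sum $\sum_{f \in F}\mathcal{W}_{HN}(f)\,|f(M)-f(M')|$ collapses to the corresponding weighted sum for that one sub-matrix, which Theorem~\ref{thrm:multi-amp} bounds by $\big(\prod_{A \in S - S_A}\mathcal{P}(A)\big)\cdot\|M-M'\|_1$. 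Maximizing over single-entry changes shows the generalized sensitivity of $F$ equals $\prod_{A \in S - S_A}\mathcal{P}(A)$, and Lemma~\ref{lmm:over-privacy} immediately yields $(2\rho/\lambda)$-differential privacy with $\rho=\prod_{A \in S - S_A}\mathcal{P}(A)$, as claimed.

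For the utility half, I would decompose the range-count query over $M^*$ according to which sub-matrix each summed entry belongs to. Because every entry of $M^*$ is reconstructed inside exactly one noisy sub-matrix, the query answer equals the sum, over the selected sub-matrices, of the partial count within each one; moreover, the predicate on each dimension in $S-S_A$ carries over unchanged, so the restriction of the query to a single sub-matrix is itself a legitimate range-count query on that $(d-|S_A|)$-dimensional matrix. There are exactly $\prod_{A \in S_A}|A|$ sub-matrices, so at most that many are selected. By Theorem~\ref{thrm:multi-utility} the noise variance contributed by each selected sub-matrix is at most $\sigma^2\cdot\prod_{A \in S - S_A}\mathcal{H}(A)$, and since \emph{Privelet}$^+$ injects independent Laplace noise into the coefficients of distinct sub-matrices, these per-sub-matrix noises are independent and their variances add. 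Summing over the at most $\prod_{A \in S_A}|A|$ contributing sub-matrices gives the stated factor $\big(\prod_{A \in S_A}|A|\big)\cdot\prod_{A \in S - S_A}\mathcal{H}(A)$.

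The main obstacle, and the step deserving the most care, is the utility decomposition: I must verify that the noise in $M^*$ genuinely splits as a direct sum across sub-matrices so that independence makes the variances additive, and that restricting the query to a single sub-matrix leaves a well-formed range-count query to which Theorem~\ref{thrm:multi-utility} applies. The privacy half is comparatively routine once the partition observation is in hand, since the generalized sensitivities of the individual sub-matrices do not compound --- a single-entry change activates only one of them --- so the \emph{maximum}, rather than the sum, governs. Some bookkeeping of the noise parameter $\sigma$ (fixed by $\lambda$ through the Laplace variance $2\lambda^2$) is also needed to align the constants in the final bound with Theorem~\ref{thrm:multi-utility}.
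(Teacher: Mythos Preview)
Your proposal is correct and matches the paper's intended derivation: the paper itself offers no detailed proof, stating only that the corollary ``follows from Theorems~\ref{thrm:multi-amp} and~\ref{thrm:multi-utility},'' and your argument is precisely the natural unpacking of that remark---the partition of $M$ into sub-matrices localises a single-entry change to one sub-matrix (so the generalized sensitivity is governed by Theorem~\ref{thrm:multi-amp} applied to the dimensions in $S-S_A$, and Lemma~\ref{lmm:over-privacy} finishes), while the query answer splits as an independent sum over at most $\prod_{A\in S_A}|A|$ sub-matrix answers, each bounded by Theorem~\ref{thrm:multi-utility}. Your caveat about the implicit $\sigma^2=2\lambda^2$ factor in the variance bound is well placed, since the corollary as stated suppresses it but it reappears in Equation~\eqref{eqn:multi-hybird-bound}.
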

\extraspacing

By Corollary~\ref{coro:multi-hybrid}, when $\epsilon$-differential
privacy is enforced, {\em Privelet$^+$} leads to a noise variance
bound of
\begin{equation}  \label{eqn:multi-hybird-bound}
8/ \epsilon^2 \cdot \Big(\prod_{A \in S_A} |A| \Big) \cdot \prod_{A \in S - S_A} \Big( \left( \mathcal{P}(A) \right)^2 \cdot \mathcal{H}(A) \Big).
\end{equation}
It is not hard to verify that, when $S_A$ contains only attributes $A$
with $|A| \le \left( \mathcal{P}(A) \right)^2 \cdot \mathcal{H}(A)$,
the bound given in Equation~\ref{eqn:multi-hybird-bound} is always no
worse than the noise variance bounds provided by {\em Privelet} and
Dwork et al.'s method.

Finally, we note that {\em Privelet$^+$} also runs in $O(n + m)$
time. This follows from the $O(n + m)$ time complexity of {\em
  Privelet}.

\section{Experiments} \label{sec:exp}

This section experimentally evaluates {\em Privelet$^+$} and Dwork et
al.'s method (referred as {\em Basic} in the
following). Section~\ref{sec:exp-query} compares their data utility,
while Section~\ref{sec:exp-time} investigates their computational cost.

\subsection{Accuracy of Range-Count Queries} \label{sec:exp-query}

We use two datasets\footnote{Both datasets are public available as
  part of the {\em Integrated Public Use Microdata Series} \cite{mpc09}.} that contain census records of
individuals from Brazil and the US, respectively. The Brazil dataset
has $10$ million tuples and four attributes, namely, {\em Age}, {\em
  Gender}, {\em Occupation}, and {\em Income}. The attributes {\em
  Age} and {\em Income} are ordinal, while {\em Gender} and {\em
  Occupation} are nominal. The US dataset also contains these four
attributes (but with slightly different domains), and it has $8$
million tuples. Table~\ref{tbl:exp-attribute} shows the domain sizes
of the attributes in the datasets. The numbers enclosed in parentheses
indicate the heights of the hierarchies associated with the nominal attributes.

\begin{table}[t]
\centering
\tblcapup \caption{Sizes of Attribute Domains} \tblcapdown \label{tbl:exp-attribute}
\begin{small}
\begin{tabular}{|c|c|c|c|c|}
\cline{2-5}
\multicolumn{1}{c|}{} & {\bf Age} & {\bf Gender} & {\bf Occupation} & {\bf Income} \\ \hline
{\bf Brazil} & $101$ & $2$ ($2$) & $512$ ($3$) & $1001$ \\ \hline
{\bf US} & $96$ & $2$ ($2$) & $511$ ($3$) & $1020$ \\ \hline
\end{tabular}
\end{small}
\tbldown
\end{table}

For each dataset, we create a set of $40000$ random range-count queries, such
that the number of predicates in each query is uniformly distributed
in $[1, 4]$. Each query predicate ``$A_i \in S_i$'' is generated as
follows. First, we choose $A_i$ randomly from the attributes in the
dataset. After that, if $A_i$ is ordinal, then $S_i$ is set to a
random interval defined on $A_i$; otherwise, we randomly select a
non-root node from the hierarchy of $A_i$, and let $S_i$ contain all
leaves in the subtree of the node. We define the {\em selectivity} of
a query $q$ as the fraction of tuples in the dataset that satisfy all
predicates in $q$. We also define the {\em coverage} of $q$ as the
fraction of entries in the frequency matrix that are covered by $q$.

We apply {\em Basic} and {\em Privelet$^+$} on each dataset to produce
noisy frequency matrices that ensure $\epsilon$-differential privacy,
varying $\epsilon$ from $0.5$ to $1.25$. For {\em Privelet$^+$}, we
set its input parameter ${S_A} = \{\textrm{\em Age}, \textrm{\em
  Gender}\}$, since each $A$ of these two attributes has a relatively
small domain, i.e., $|A| \le (\mathcal{P}(A))^2 \cdot \mathcal{H}(A)$,
where $\mathcal{P}$ and $\mathcal{H}$ are as defined in
Section~\ref{sec:multi-bounds}. We use the noisy frequency matrices to
derive approximate answers for range-count queries. The quality of
each approximate answer $x$ is gauged by its {\em square error} and
{\em relative error} with respect to the actual query result
$act$. Specifically, the square error of $x$ is defined as $(x -
act)^2$, and the relative error of $x$ is computed as $|x-act| /
\max\{act, s\}$, where $s$ is a {\em sanity bound} that
mitigates the effects of the queries with excessively small
selectivities (we follow with this evaluation methodology from previous
work \cite{vw99,gk05}). We set $s$ to $0.1\%$ of the number of tuples in the dataset.

In our first set of experiments, we divide the query set $Q_{Br}$ for
the Brazil dataset into $5$ subsets. All queries in the $i$-th ($i \in
[1, 5]$) subset have coverage that falls between the $(i-1)$-th and $i$-th
quintiles of the query coverage distribution in $Q_{Br}$. On each
noisy frequency matrix generated from the Brazil dataset, we process
the $5$ query subsets in turn, and plot in
Figure~\ref{fig:exp-rel-vs-sel-bra} the average square error in each
subset as a function of the average query
coverage. Figure~\ref{fig:exp-rel-vs-sel-usa} shows the results of a
similar set of experiments conducted on the US dataset.

\begin{figure*}[t]
\begin{small}
\begin{tabular}{cccc}
\hspace{-6mm}\includegraphics[height=35mm]{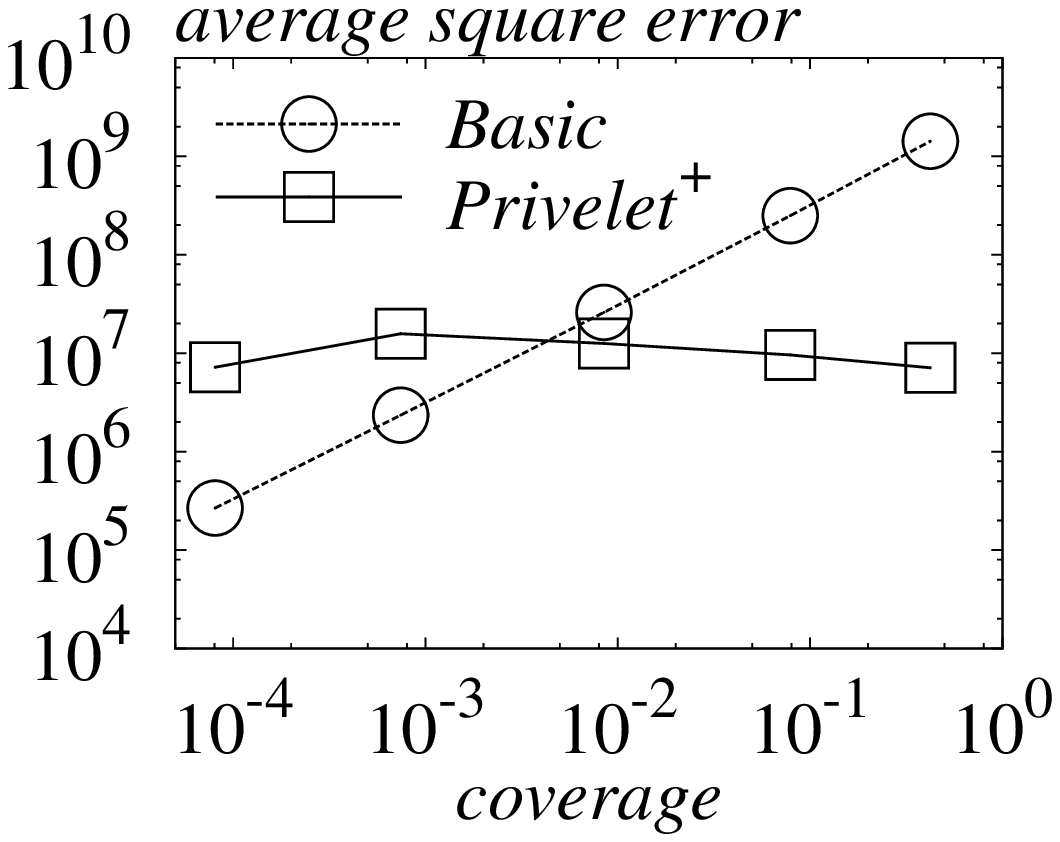} &
\hspace{-8mm}\includegraphics[height=35mm]{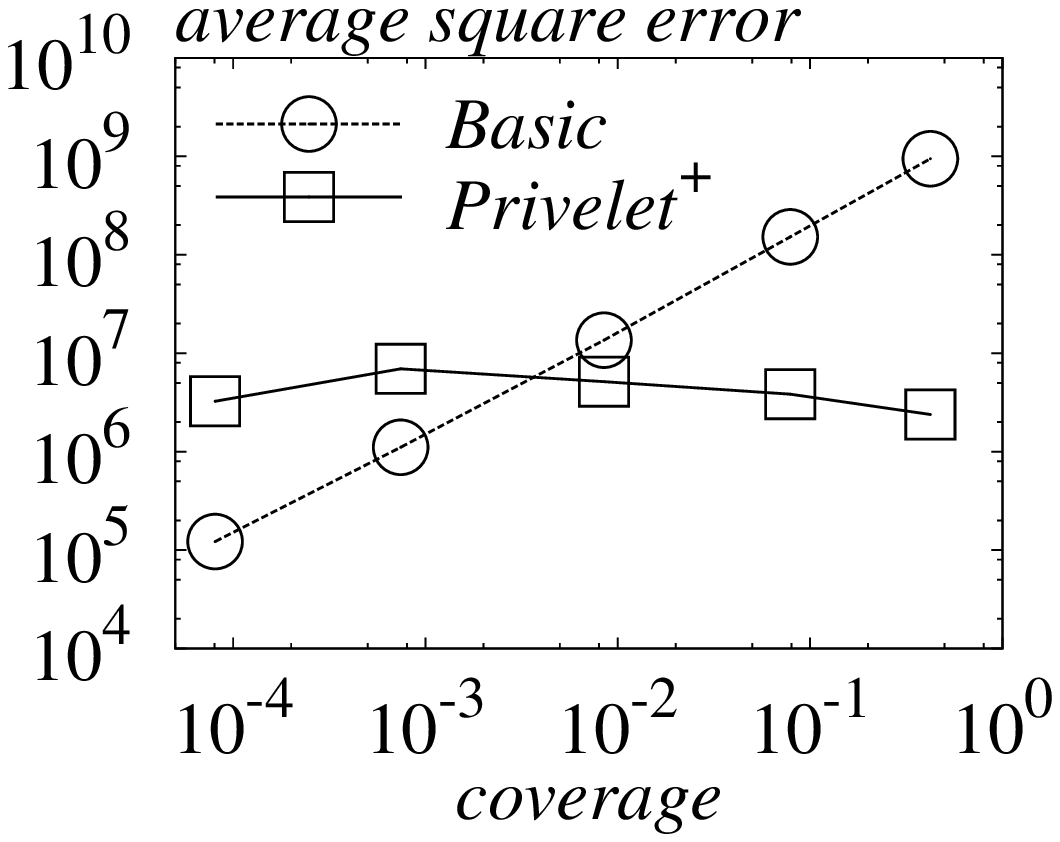} &
\hspace{-8mm}\includegraphics[height=35mm]{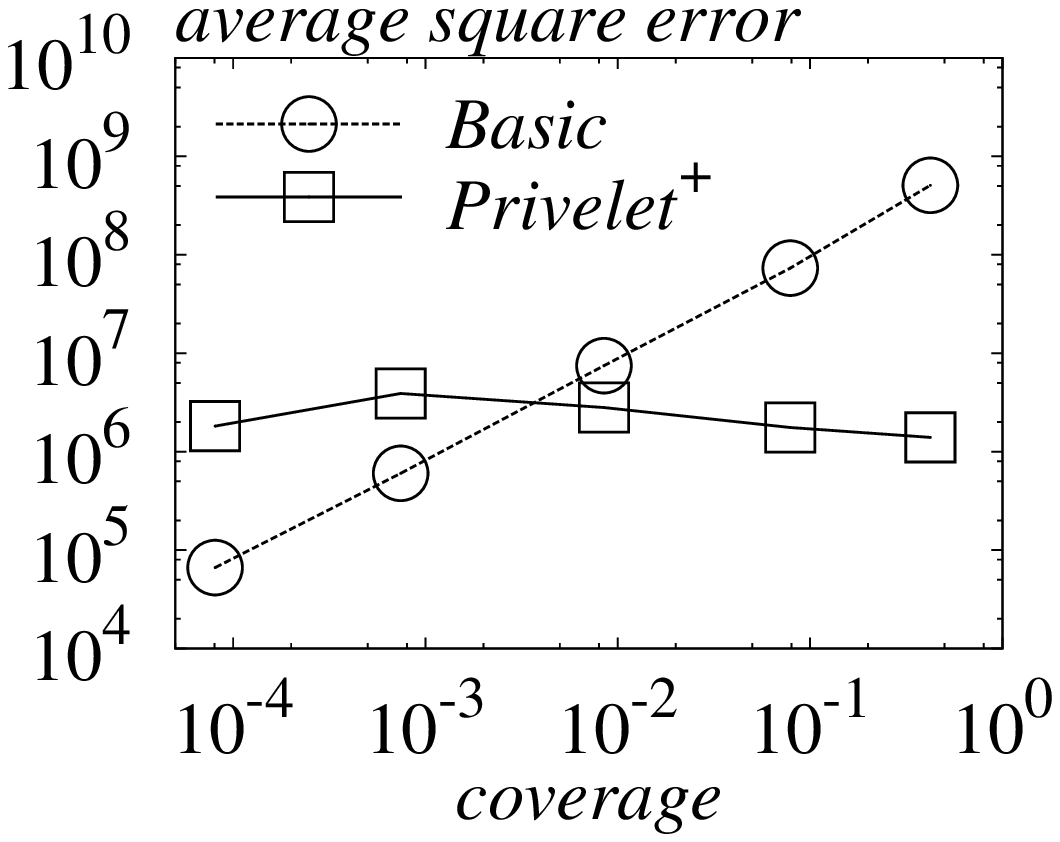} &
\hspace{-8mm}\includegraphics[height=35mm]{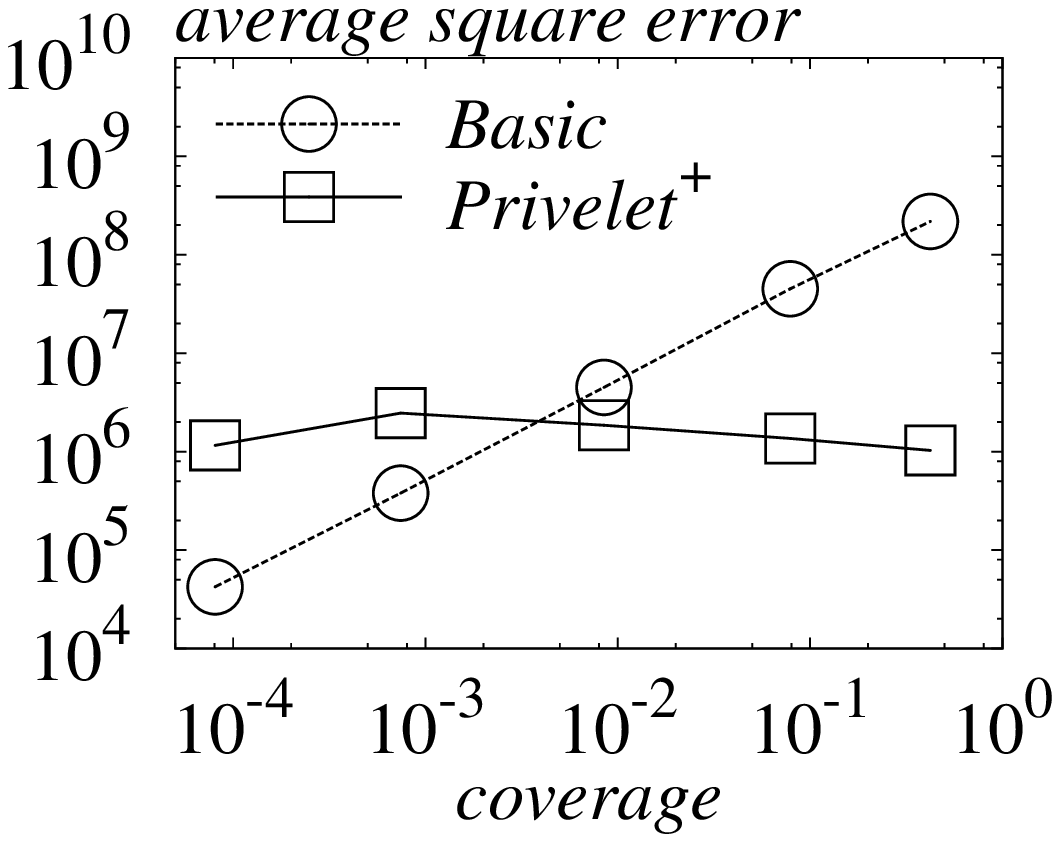} \\
\hspace{3mm}(a) $\epsilon = 0.5$ & \hspace{-0mm}(b) $\epsilon = 0.75$ & \hspace{-0mm}(c) $\epsilon = 1$ & \hspace{-0mm}(d) $\epsilon = 1.25$
\end{tabular}
\end{small}
\figcapup \caption{Average Square Error vs.\ Query Coverage (Brazil)} \figcapdown \vspace{2mm}
\label{fig:exp-var-vs-cov-bra}
\end{figure*}

\begin{figure*}[t]
\begin{small}
\begin{tabular}{cccc}
\hspace{-6mm}\includegraphics[height=35mm]{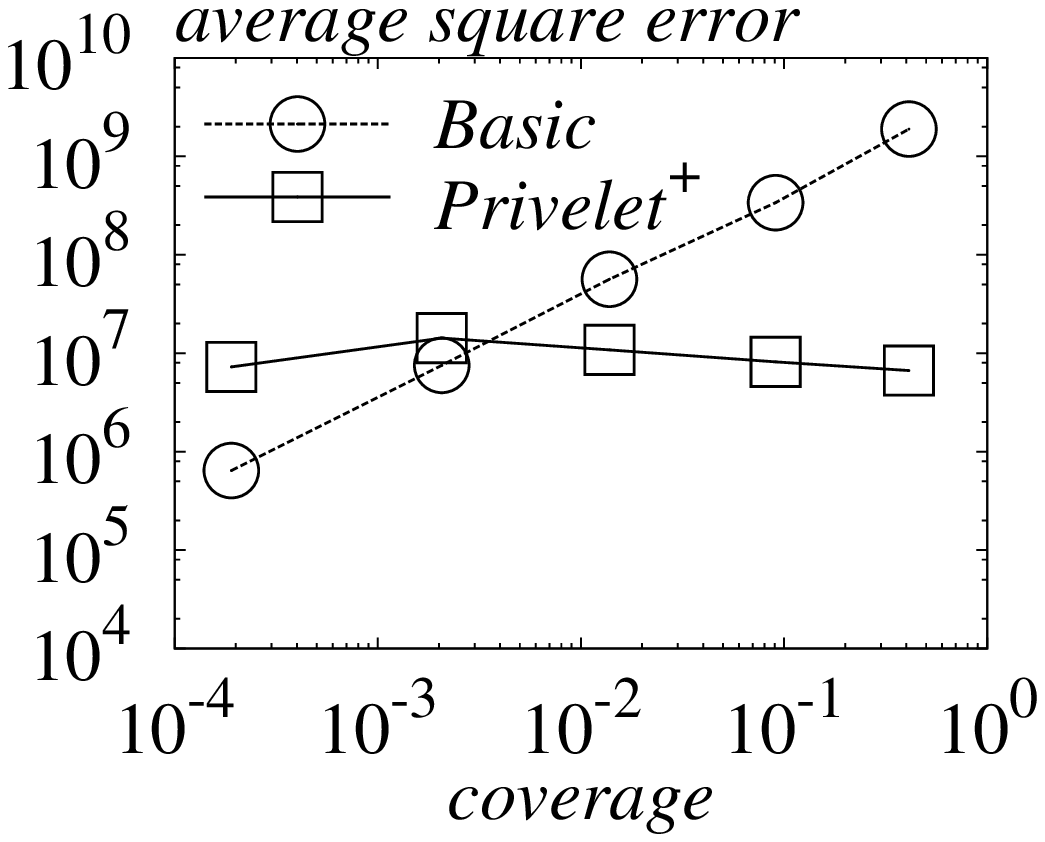} &
\hspace{-8mm}\includegraphics[height=35mm]{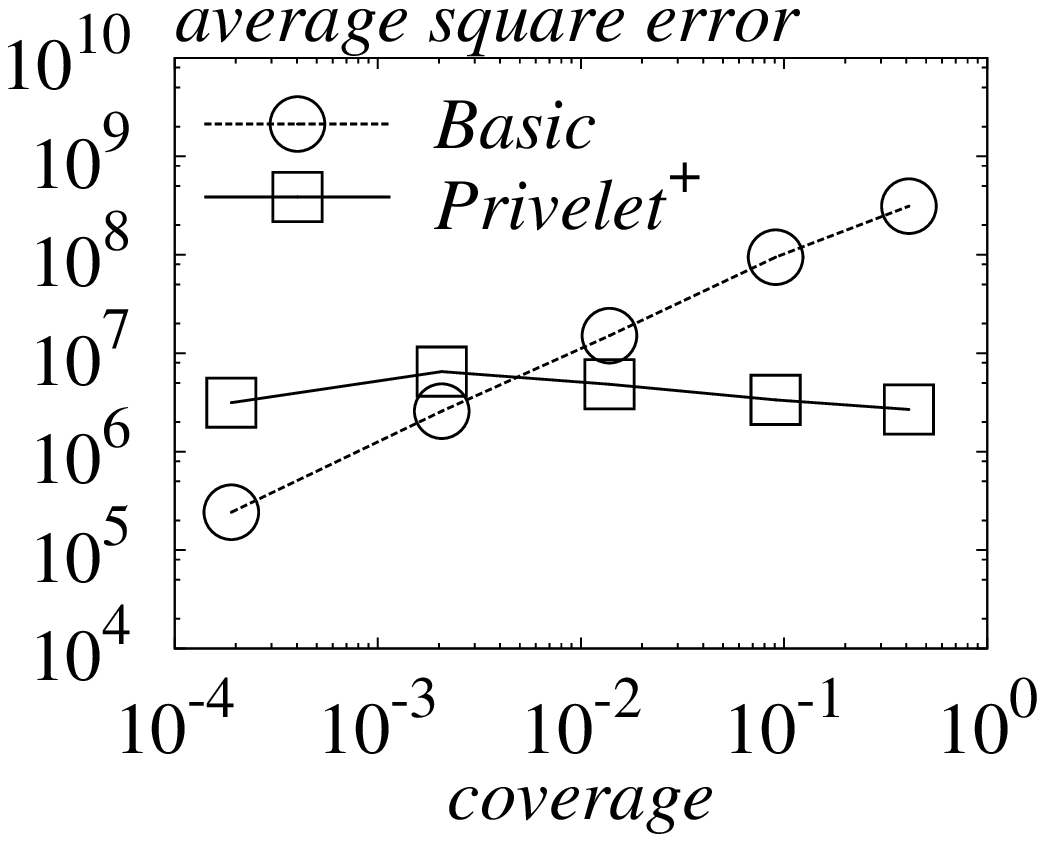} &
\hspace{-8mm}\includegraphics[height=35mm]{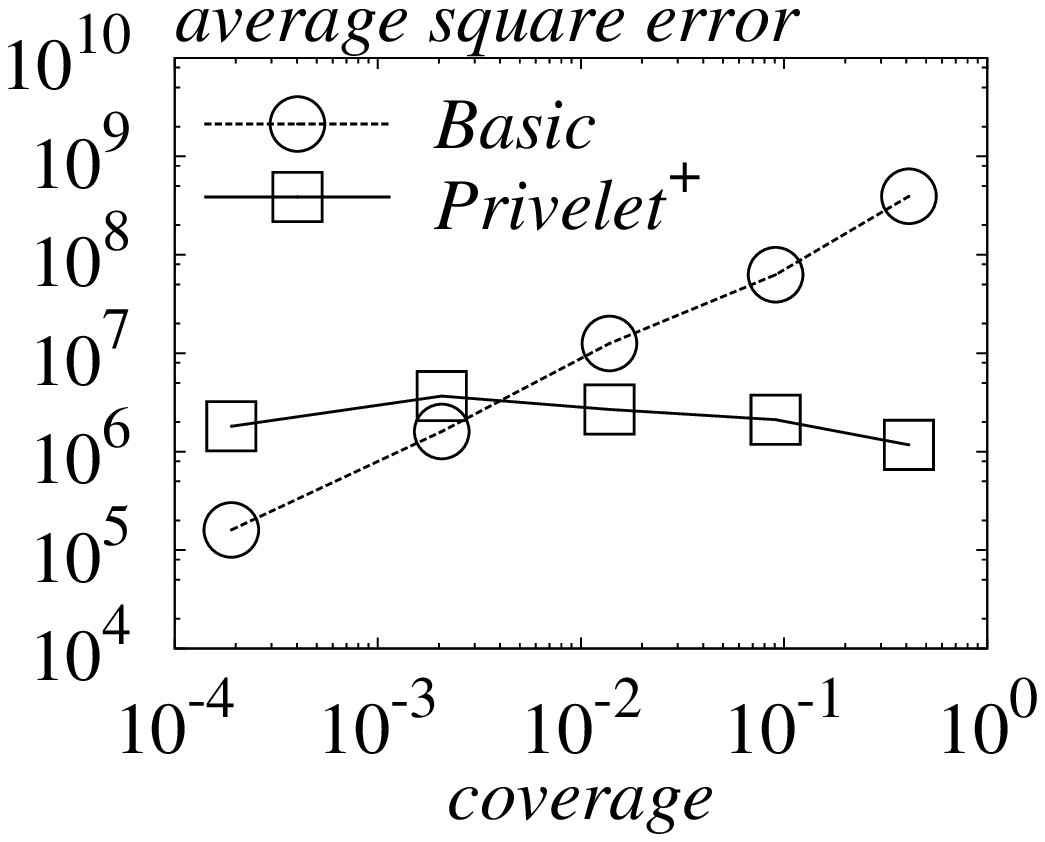} &
\hspace{-8mm}\includegraphics[height=35mm]{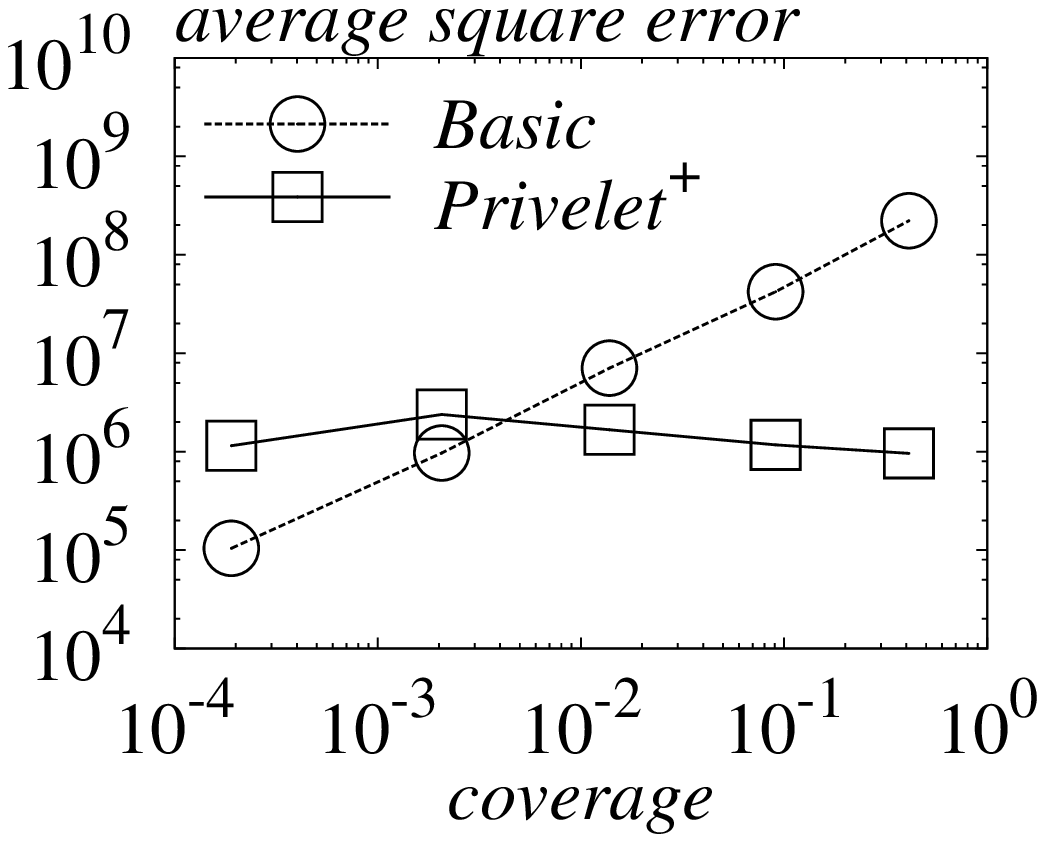} \\
\hspace{3mm}(a) $\epsilon = 0.5$ & \hspace{-0mm}(b) $\epsilon = 0.75$ & \hspace{-0mm}(c) $\epsilon = 1$ & \hspace{-0mm}(d) $\epsilon = 1.25$
\end{tabular}
\end{small}
\figcapup \caption{Average Square Error vs.\ Query Coverage (US)} \figcapdown \vspace{2mm}
\label{fig:exp-var-vs-cov-usa}
\end{figure*}

The average square error of {\em Basic} increases linearly with the
query coverage, which conforms to the analysis in
Section~\ref{sec:prelim-basic} that {\em Basic} incurs a noise
variance linear to the coverage of the query. In contrast, the average
square error of {\em Privelet$^+$} is insensitive to the query
coverage. The maximum average error of {\em Privelet$^+$} is smaller
than that of {\em Basic} by two orders of magnitudes. This is
consistent with our results that {\em Privelet$^+$} provides a much
better noise variance bound than {\em Basic} does. The error of both
{\em Basic} and {\em Privelet$^+$} decreases with the increase of
$\epsilon$, since a larger $\epsilon$ leads to a smaller amount of
noise in the frequency matrices.

\begin{figure*}[t]
\begin{small}
\begin{tabular}{cccc}
\hspace{-6mm}\includegraphics[height=35mm]{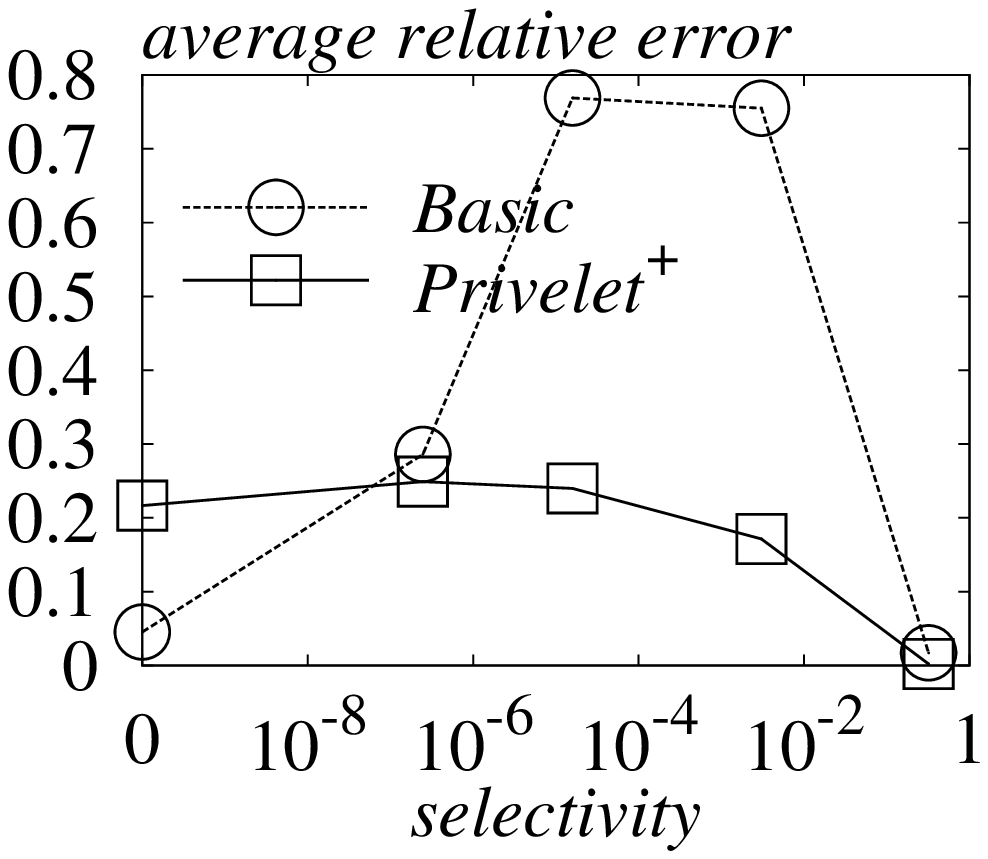} &
\hspace{-8mm}\includegraphics[height=35mm]{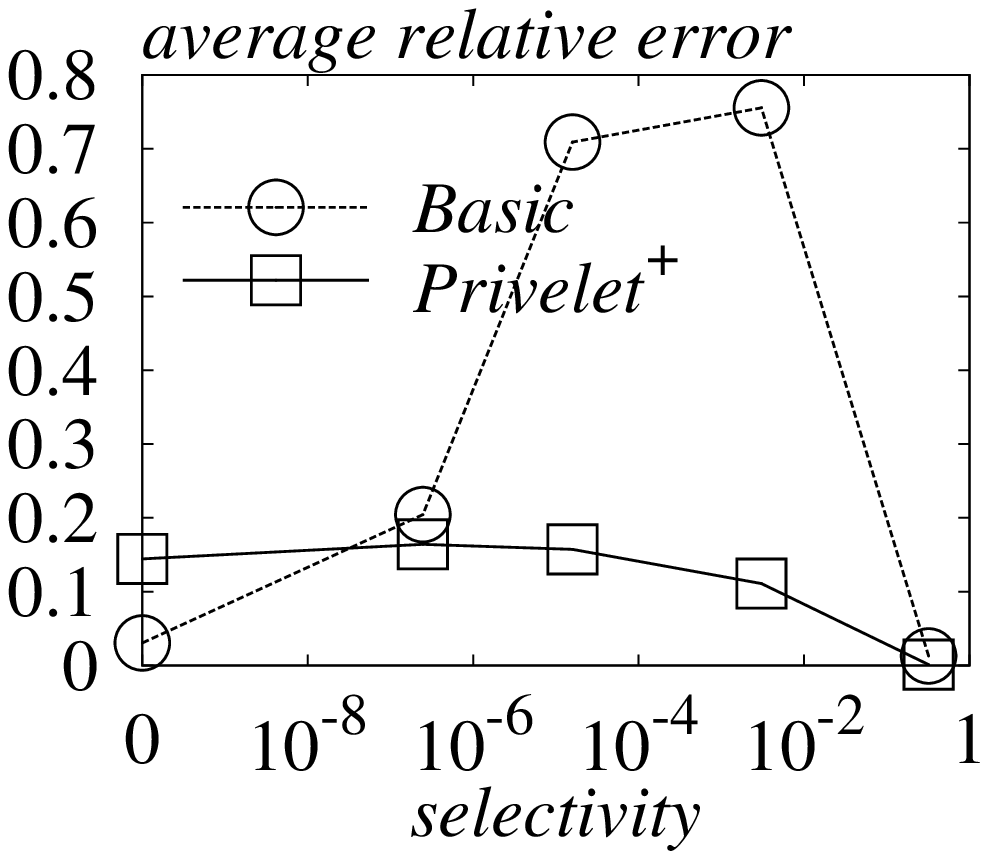} &
\hspace{-8mm}\includegraphics[height=35mm]{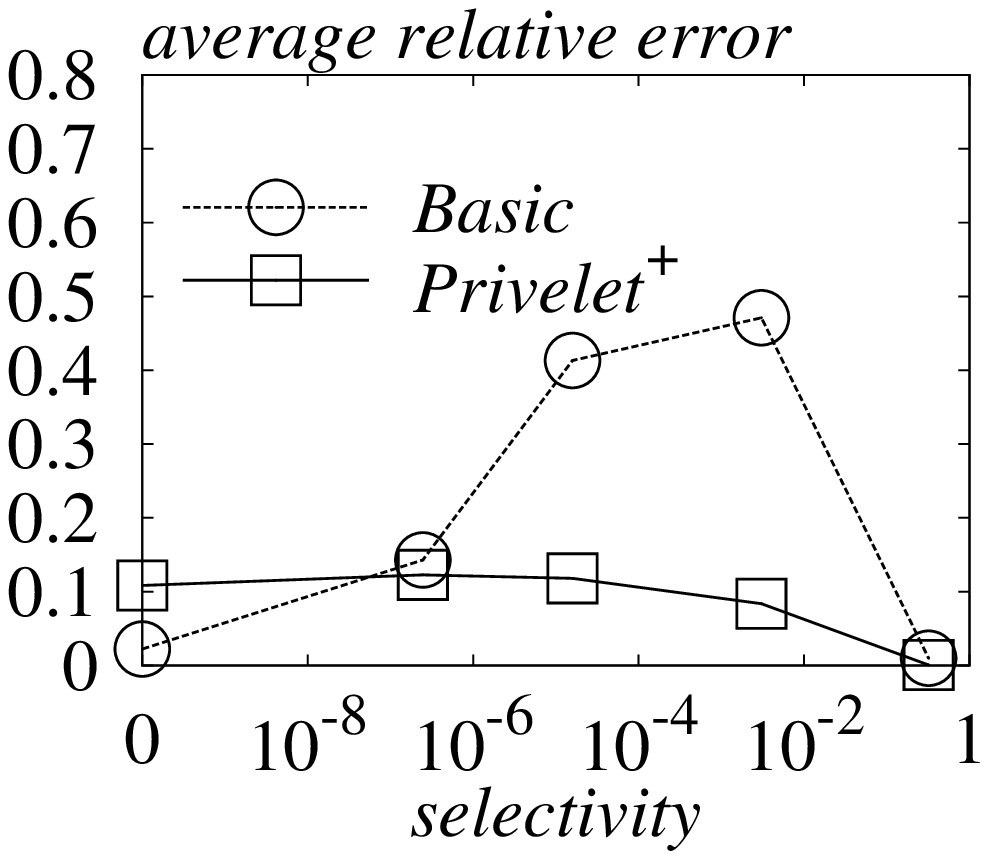} &
\hspace{-8mm}\includegraphics[height=35mm]{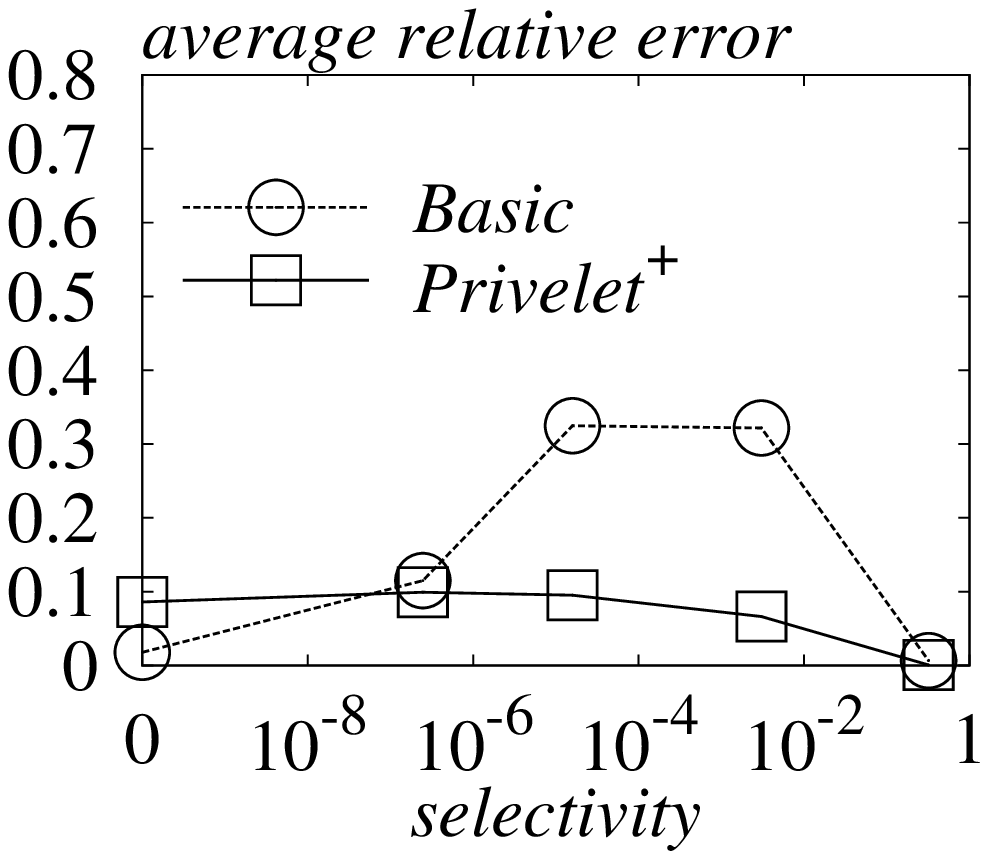} \\
\hspace{3mm}(a) $\epsilon = 0.5$ & \hspace{-0mm}(b) $\epsilon = 0.75$ & \hspace{-0mm}(c) $\epsilon = 1$ & \hspace{-0mm}(d) $\epsilon = 1.25$
\end{tabular}
\end{small}
\figcapup \caption{Average Relative Error vs.\ Query Selectivity (Brazil)} \figcapdown \vspace{2mm}
\label{fig:exp-rel-vs-sel-bra}
\end{figure*}

\begin{figure*}[t]
\begin{small}
\begin{tabular}{cccc}
\hspace{-6mm}\includegraphics[height=35mm]{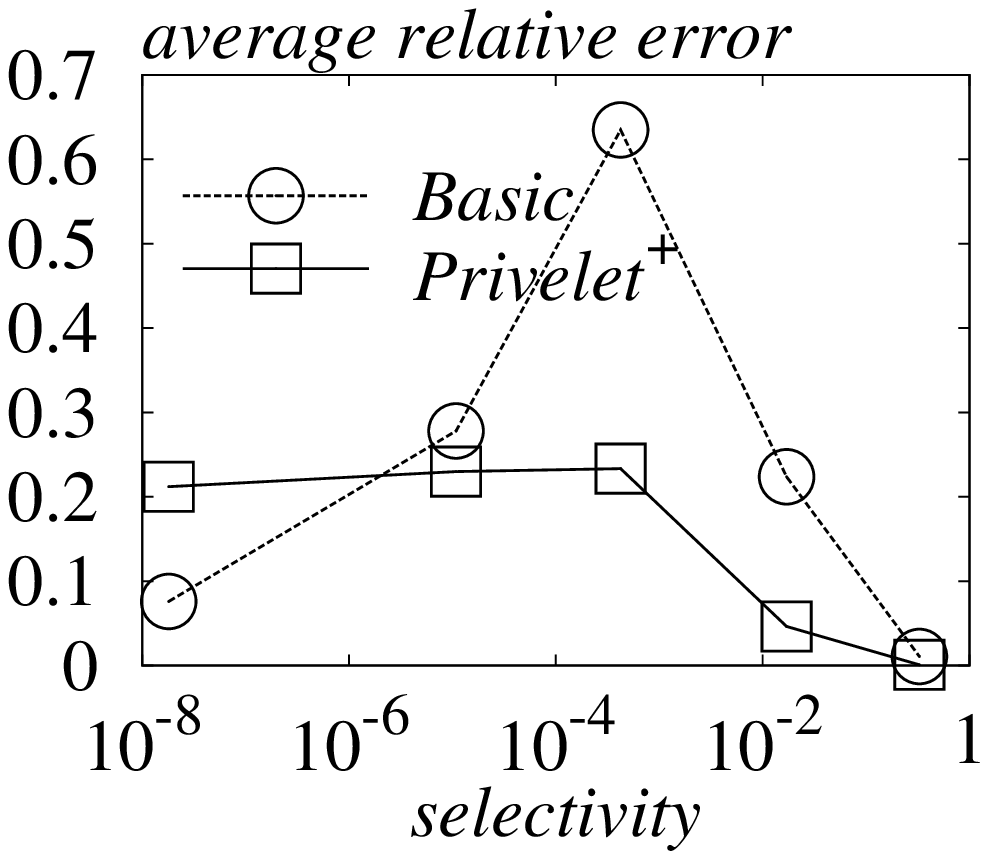} &
\hspace{-8mm}\includegraphics[height=35mm]{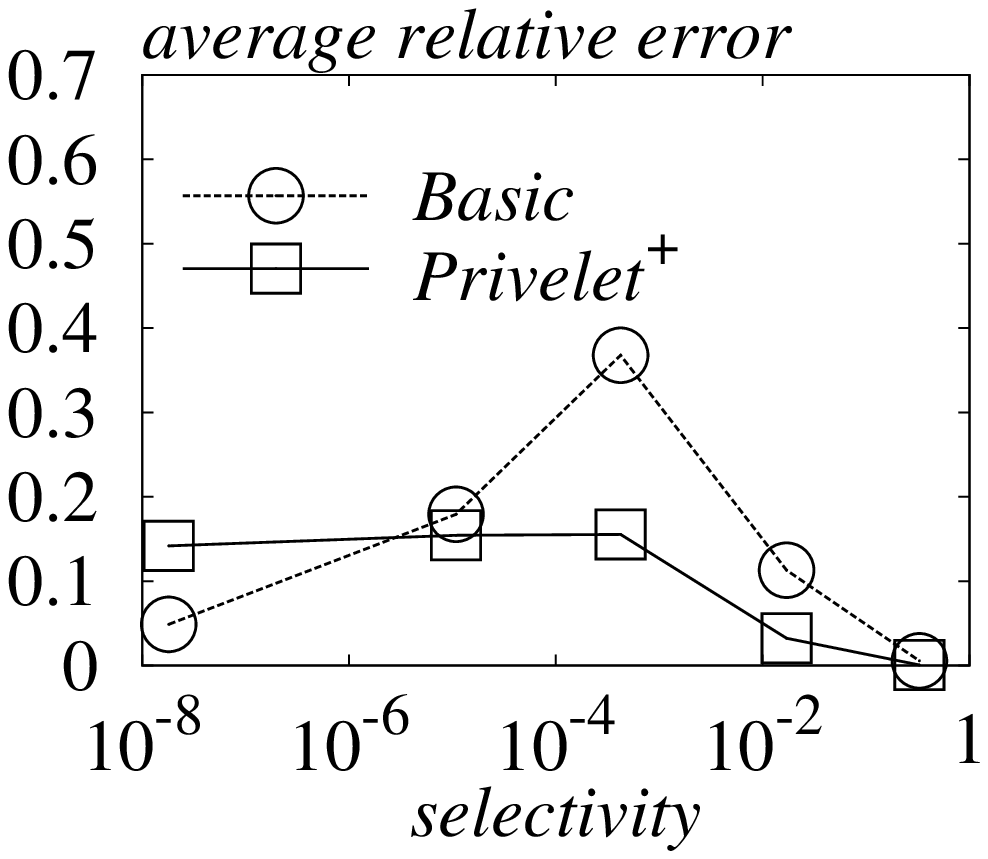} &
\hspace{-8mm}\includegraphics[height=35mm]{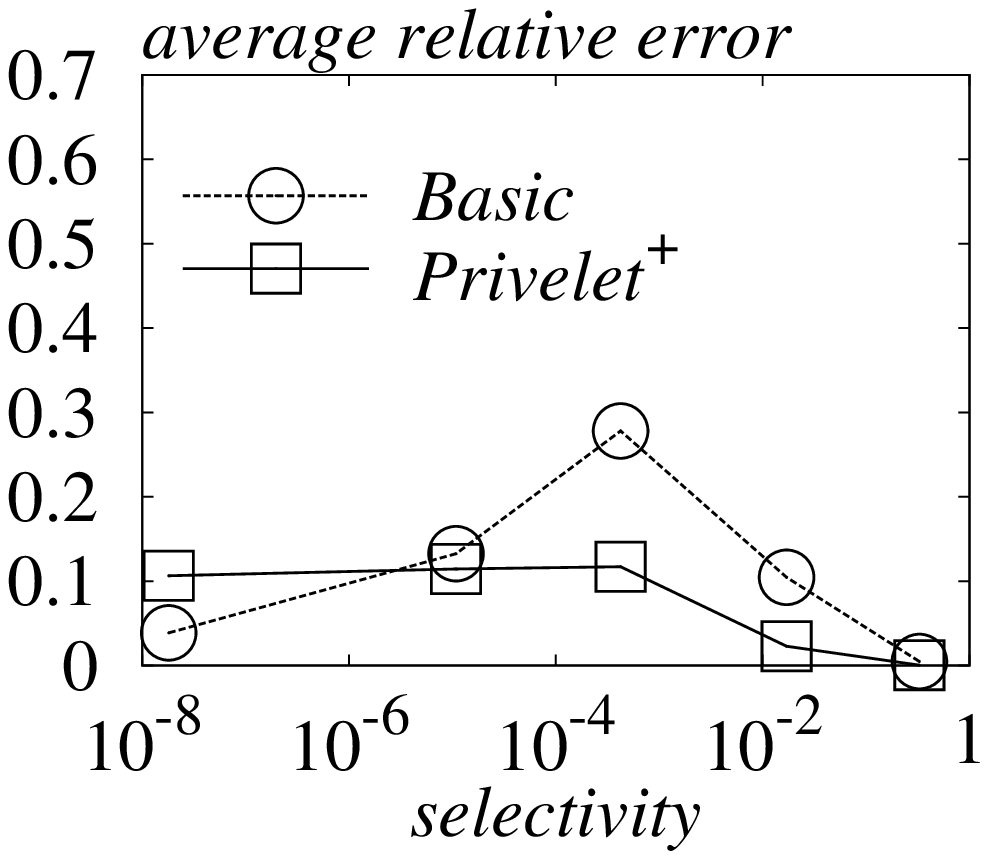} &
\hspace{-8mm}\includegraphics[height=35mm]{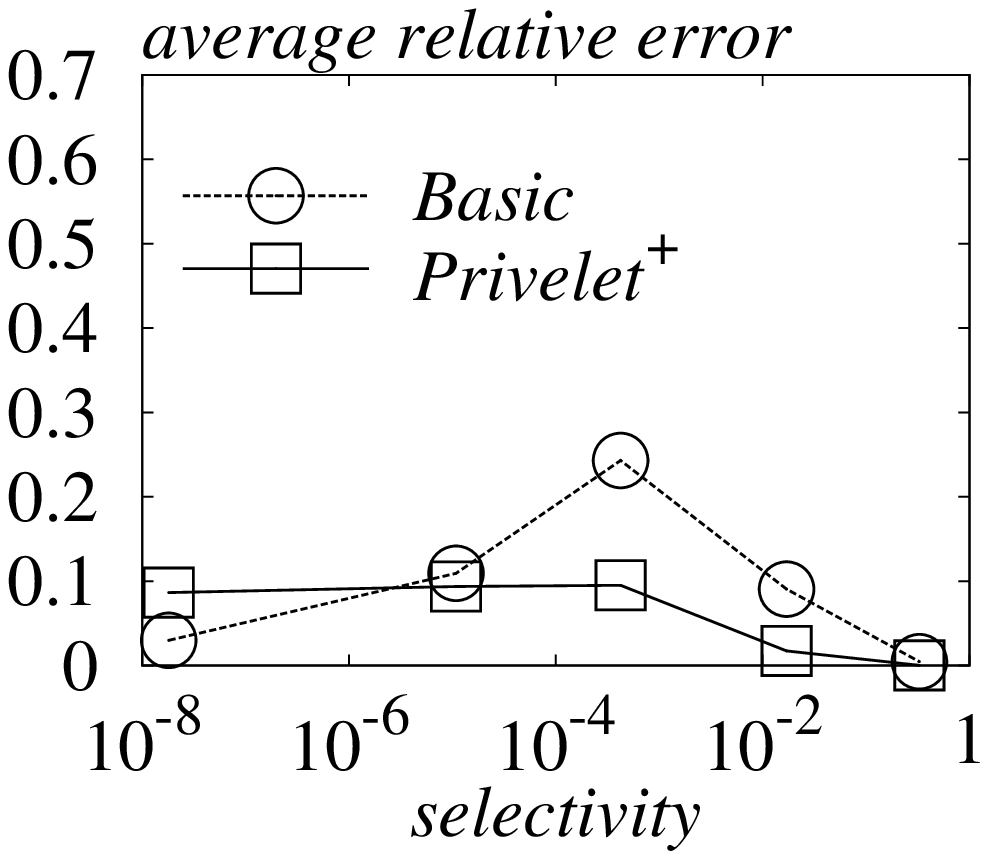} \\
\hspace{3mm}(a) $\epsilon = 0.5$ & \hspace{-0mm}(b) $\epsilon = 0.75$ & \hspace{-0mm}(c) $\epsilon = 1$ & \hspace{-0mm}(d) $\epsilon = 1.25$
\end{tabular}
\end{small}
\figcapup \caption{Average Relative Error vs.\ Query Selectivity (US)} \figcapdown
\label{fig:exp-rel-vs-sel-usa}
\end{figure*}

In the next experiments, we divide the query set for each dataset into
$5$ subsets based on query selectivities. Specifically, the $i$-th ($i
\in [1, 5]$) subset contains the queries whose selectivities are
between the $(i-1)$-th and $i$-th quintiles of the overall query
selectivity distribution. Figures \ref{fig:exp-rel-vs-sel-bra} and
\ref{fig:exp-rel-vs-sel-usa} illustrate the average relative error
incurred by each noisy frequency matrix in answering each query
subset. The X-axes of the figures represent the average selectivity of
each subset of queries. The error of {\em Privelet$^+$} is
consistently lower than that of {\em Basic}, except when the query
selectivity is below $10^{-7}$. In addition, the error of {\em
  Privelet$^+$} is no more than $25\%$ in all cases, while {\em Basic}
induces more than $70\%$ error in several query subsets.

In summary, our experiments show that {\em Privelet$^+$} significantly
outperforms {\em Basic} in terms of the accuracy of range-count
queries. Specifically, {\em Privelet$^+$} incurs a smaller query error
than {\em Basic} does, whenever the query coverage is larger than
$1\%$ or the query selectivity is at least $10^{-7}$.

\subsection{Computation Time} \label{sec:exp-time}

Next, we investigate how the computation time of {\em Basic} and {\em
  Privelet$^+$} varies with the number $n$ tuples in the input data
and the number $m$ of entries in the frequency matrix. For this
purpose, we generate synthetic datasets with various values of $n$ and
$m$. Each dataset contains two ordinal attributes and two nominal
attributes. The domain size of each attribute is $m^{1/4}$. Each
nominal attribute $A$ has a hierarchy $H$ with three levels, such that
the number of level-2 nodes in $H$ is $\sqrt{|A|}$. The values of the
tuples are uniformly distributed in the attribute domains.

In the first set of experiments, we fix $m = 2^{24}$, and apply {\em
  Basic} and {\em Privelet$^+$} on datasets with $n$ ranging from $1$
million to $5$ millions. For {\em Privelet$^+$}, we set its input
parameter $S_A = \emptyset$, in which case {\em Privelet$^+$} has a
relatively large running time, since it needs to perform wavelet
transforms on all dimensions of the frequency
matrix. Figure~\ref{fig:time-vs-n} illustrates the computation time of
{\em Basic} and {\em Privelet$^+$} as a function of $n$. Observe that
both techniques scale linearly with $n$.

In the second set of experiments, we set $n = 5 \times 10^6$, and vary
$m$ from $2^{22}$ to $2^{26}$. Figure~\ref{fig:time-vs-m} shows the
computation overhead of {\em Basic} and {\em Privelet$^+$} as a
function of $m$. Both techniques run in linear time with respect to
$m$.

In summary, the computation time of {\em Privelet$^+$} is linear to $n$
and $m$, which confirms our analysis that {\em Privelet$^+$} has an
$O(n + m)$ time complexity. Compared to {\em Basic}, {\em
  Privelet$^+$} incurs a higher computation overhead, but this is
justified by the facts that {\em Privelet$^+$} provides much
better utility for range-count queries.

\section{Related Work} \label{sec:related}

Numerous techniques have been proposed for ensuring
$\epsilon$-differential privacy in data publishing
\cite{dmn06,kkm09,gmw09,cm08,blr08,kln08,nrs07,mka08,bcd07}. The
majority of these techniques, however, are not designed for the
publication of general relational tables. In particular, the solutions
by Korolova et al.\cite{kkm09} and G\"{o}tz et al.\cite{gmw09} are
developed for releasing query and click histograms from search
logs. Chaudhuri and Monteleoni \cite{cm08}, Blum et al.\cite{blr08},
and Kasiviswanathan et al.\cite{kln08} investigate how the results of
various machine learning algorithms can be published. Nissum et
al.\cite{nrs07} propose techniques for releasing (i) the median value
of a set of real numbers, and (ii) the centers of the clusters output
from the $k$-means clustering algorithm. Machanavajjhala et
al.\cite{mka08} study the publication of {\em commmuting patterns},
i.e., tables with a scheme $\langle${\em ID}, {\em Origin}, {\em
  Destination}$\rangle$ where each tuple captures the residence and
working locations of an individual.

The work closest to ours is by Dwork et al.\cite{dmn06} and Barak et
al.\cite{bcd07}. Dwork et al.'s method, as discussed previously, is
outperformed by our {\em Privelet} technique in terms of the accuracy
of range-count queries. On the other hand, Barak et al.'s technique is
designed for releasing {\em marginals}, i.e., the projections of a
frequency matrix on various subsets of the dimensions. Given a set of
marginals, Barak et al.'s technique first transforms them into the
Fourier domain, then adds noise to the Fourier coefficients. After
that, it refines the noisy coefficients, and maps them back to a set of
noisy marginals. Although this technique and {\em Privelet} have a
similar framework, their optimization goals are drastically
different. Specifically, Barak et al.'s technique does not provide
utility guarantees for range-count queries; instead, it ensures that
(i) every entry in the noisy marginals is a non-negative integer, and
(ii) all marginals are mutually consistent, e.g., the sum of all
entries in a marginal always equals that of another marginal.

\begin{figure}[t]
\begin{minipage}[t]{1.72 in}
\centering
\includegraphics[height=34.5mm]{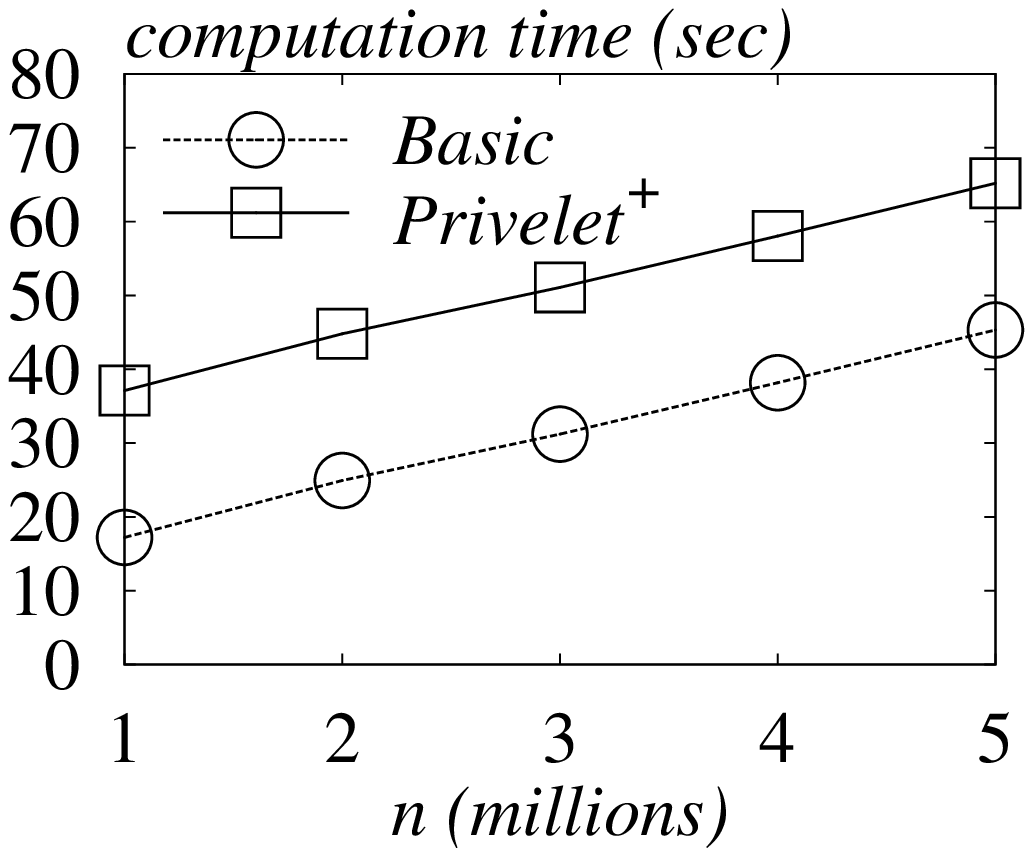}
\figcapup \caption{Computation Time vs.\ $n$} \figcapdown \label{fig:time-vs-n}
\end{minipage}
\begin{minipage}[t]{1.72 in}
\centering
\includegraphics[height=34.5mm]{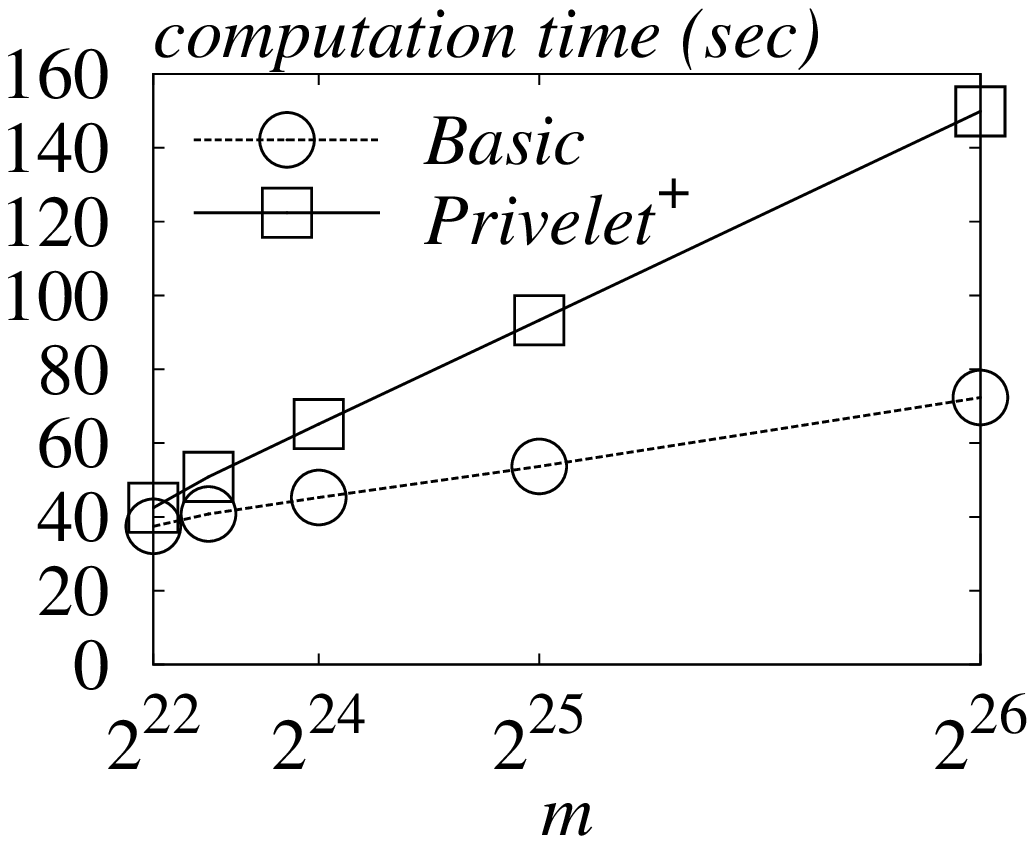}
\figcapup \caption{Computation Time vs.\ $m$} \figcapdown \label{fig:time-vs-m}
\end{minipage}
\end{figure}

In addition, Barak et al's technique requires solving a linear program where the number of variables equals the number $m$ of entries in the frequency matrix. This can be computationally challenging for practical datasets with a large $m$. For instance, for the two census datasets used in our experiments, we have $m > 10^8$. In contrast, {\em Privelet} runs in time linear to $m$ and the number $n$ of tuples in the input table.

Independent of our work, Hay et al.\cite{hrm09} propose an approach for achieving $\epsilon$-differential privacy while ensuring polylogorithmic noise variance in range-count query answers. Given a one-dimensional frequency matrix $M$, Hay et al.'s approach first computes the results of a set of range-count queries on $M$, and then adds Laplace noise to the results. After that, it derives a noisy frequency matrix $M^*$ based on the noisy query answers, during which it carefully exploits the correlations among the answers to reduce the amount of noise in $M^*$. Although Hay et al.'s approach and {\em Privelet} provide comparable utility guarantees, the former is designed exclusively for one-dimensional datasets, whereas the latter is applicable on datasets with arbitrary dimensionalities.

There also exists a large body of literature (e.g.,
\cite{cgr01,vw99,gk05}) on the application of wavelet transforms in
data management. The focus of this line of research, however, is not
on privacy preservation. Instead, existing work mainly
investigates how wavelet transforms can be used to construct space-
and time-efficient representations of multi-dimensional data, so as to
facilitate query optimization \cite{vw99}, or approximate query
processing \cite{cgr01,gk05}, just to name two applications.

\section{Conclusions} \label{sec:conclu}

We have presented {\em Privelet}, a data publishing technique that
utilizes wavelet transforms to ensure $\epsilon$-differential
privacy. Compared to the existing solutions, {\em Privelet} provides
significantly improved theoretical guarantees on the accuracy of
range-count queries. Our experimental evaluation demonstrates the
effectiveness and efficiency of {\em Privelet}.

For future work, we plan to extend {\em Privelet} for the case where
the distribution of range-count queries is known in advance. Furthermore,
currently {\em Privelet} only provides bounds on the noise variance in
the query results; we want to investigate what guarantees {\em
  Privelet} may offer for other utility metrics, such as the expected
relative error of the query answers.

\bibliographystyle{IEEEtran}

\bibliography{ref}

\appendix{}

\subsection{Proof of Lemma~\ref{lmm:over-privacy}} \label{sec:proof-over-privacy}

Let $T_1$ and $T_2$ be any two tables that differ in only one tuple, $M_1$ and $M_2$ be the frequency matrices of $T_1$ and $T_2$, respectively. Let $T_3 = T_1 \cap T_2$, and $M_3$ be the frequency matrix of $T_3$. Observe that $M_1$ and $M_3$ differ in only one entry, and the entry's value in $M_1$ differs from its value in $M_3$ by one. Since $F$ has a generalized sensitivity $\rho$ with respect to $\mathcal{W}$,
\begin{equation} \nonumber
\sum_{f \in F} \Big( \mathcal{W}(f) \cdot |f(M_1) - f(M_3)| \Big) \le \rho \cdot \|M_1 - M_3\|_1 = \rho.
\end{equation}
Similarly, we have
\begin{equation} \nonumber
\sum_{f \in F} \Big( \mathcal{W}(f) \cdot |f(M_2) - f(M_3)| \Big) \le \rho \cdot \|M_2 - M_3\|_1 = \rho.
\end{equation}

Let $f_i$ ($i \in [1, |F|]$) be the $i$-th query in $F$, and $x_i$ be an arbitrary real number. We have\\
\begin{flalign*}
&\frac{Pr\big\{\mathcal{G}(T_2) = \langle x_1, x_2, \ldots, x_{|F|} \rangle\big\}}{Pr\big\{\mathcal{G}(T_2) = \langle x_1, x_2, \ldots, x_{|F|} \rangle\big\}} \\
 & \; =  \frac{\Pi_{i=1}^{|F|} \left(\frac{\mathcal{W}(f_i)}{2 \lambda} \cdot \exp\Big(-\mathcal{W}(f_i) \cdot |x_i- f_i(M_2)| / \lambda\Big)\right)}{\Pi_{i=1}^{|F|} \left(\frac{\mathcal{W}(f_i)}{2 \lambda} \cdot \exp\Big(-\mathcal{W}(f_i) \cdot |x_i - f_i(M_1)| / \lambda\Big)\right)} \\
 & \; \le  \Pi_{i=1}^{|F|} \exp\Big(\mathcal{W}(f_i) \cdot  |f_i(M_1)- f_i(M_2)| / \lambda\Big) \\
  & \; \le  \Pi_{i=1}^{|F|} \exp\Big(\mathcal{W}(f_i) \cdot |f_i(M_1)- f_i(M_3)| / \lambda \\
  & \qquad + \mathcal{W}(f_i) \cdot |f_i(M_2)- f_i(M_3)|/\lambda \Big)  \\
 & \; \le  e^{2 \rho / \lambda},
\end{flalign*}
which completes the proof.


\subsection{Proof of Lemma~\ref{lmm:ord-utility}} \label{sec:proof-ord-utility}

Let $R$ be the decomposition tree of $M^*$. Recall that each entry $v$
in $M^*$ can be expressed as a weighted sum (see
Equation~\ref{eqn:ord-recon}) of the base coefficient $c_0 \in C$ and
the ancestors of $v$ in $R$. In particular, the base coefficient has a
weight $1$ in the sum. On the other hand, an ancestor $c$ of $v$ has a
weight $1$ ($-1$) in the sum, if $v$ is in the left (right) subtree of
$c$. Therefore, for any one-dimensional range-count query with a
predicate ``$A_1 \in S_1$'', its answer on $M^*$ can be formulated as
a weighted sum $y$ of the wavelet coefficients as follows:
\begin{equation} \nonumber
y = |S_1| \cdot c_0 + \sum_{c \in C \backslash \{c_0\}} \Big( c \cdot \big(\alpha(c) - \beta(c)\big) \Big),
\end{equation}
where $\alpha(c)$ ($\beta(c)$) denotes the number of leaves in the left (right) subtree of $c$ that are contained in $S_1$.

For any coefficient $c$, if none of the leaves under $c$ is contained in $S_1$, we have $\alpha(c) = \beta(c) = 0$. On the other hand, if all leaves under $c$ are covered by $S_1$, then $\alpha(c) = \beta(c) = 2^{l - level(c)}$, where $level(c)$ denotes the level of $c$ in $R$. Therefore, $\alpha(c) - \beta(c) \ne 0$, if and only if the left or right subtree of $c$ partially intersects $S_1$. At any level of the decomposition tree $R$, there exist at most two such coefficients, since $S_1$ is an interval defined on $A_1$.

Let $l = \log_2 |M^*|$. Consider a coefficient $c$ at level $i$ ($i \in [1, l]$) of $R$, such that $\alpha(c) - \beta(c) \ne 0$. Since the left (right) subtree of $c$ contains at most $2^{l-i}$ leaves, we have $\alpha(c), \beta(c) \in [0, 2^{l-i}]$. Therefore, $|\alpha(c) - \beta(c)| \le 2^{l-i}$. Recall that $\mathcal{W}_{Haar}(c) = 2^{l-i+1}$; therefore, the noise in $c$ has a variance at most $\sigma^2/ 4^{l-i+1}$. In that case, the noise contributed by $c$ to $y$ has a variance at most
\begin{eqnarray*}
\big(\alpha(c) - \beta(c)\big)^2 \cdot \sigma^2 / 4^{l-i+1} & \le &  \big(2^{l-i}\big)^2 \cdot \sigma^2 / 4^{l-i+1} \\
& = & \sigma^2/4.
\end{eqnarray*}
On the other hand, the noise in the base coefficient $c_0$ has a variance at most $\left(\sigma/|M^*|\right)^2$. Therefore, the noise contributed by $c_0$ to $y$ has a variance at most $|S_1|^2 \cdot \left(\sigma / |M^*|\right)^2$, which is no more than $\sigma^2$.

In summary, the variance of noise in $y$ is at most
\begin{eqnarray*}
\sigma^2 + 2 \cdot l \cdot \sigma^2 / 4 &=& (2 + \log_2 |M^*|) /2 \cdot \sigma^2,
\end{eqnarray*}
which completes the proof.

\subsection{Proof of Lemma~\ref{lmm:nom-utility}} \label{sec:proof-nom-utility}

We will prove the lemma in two steps: The first step analyzes the variance of noise in each coefficient in $C^*$; the second step shows that the result of any range-count query can be expressed as a weighted sum of the coefficients in $C^*$, such that the variance of the sum is less than $\sigma^2$.

Let $C$ be the set of nominal wavelet coefficients of the input frequency matrix $M$. Let $G$ be any sibling group in $C$, and $G'$ ($G^*$) be the corresponding group in $C'$ ($C^*$). By the way each coefficient in $C$ is computed, $\sum_{g \in G} g = 0$. Let $g_i$, $g'_i$, and $g^*_i$ denote the $i$-th ($i \in [1, |G|]$) coefficient in $G$, $G'$, and $G^*$, respectively. Let $\eta_i = g'_i - g_i$ be the noise in $g'_i$. We have
\begin{align} \label{eqn:nom-utility-1}
g^*_1 \; &= \; g'_1 - \frac{1}{|G|} \cdot \sum_{i = 1}^{|G|} g'_i \; = \; g_1 + \eta_1 - \frac{1}{|G|} \cdot \sum_{i = 1}^{|G|} (g_i + \eta_i) \nonumber \\
&= \; g_1 + (1 - \frac{1}{|G|}) \cdot \eta_1 - \frac{1}{|G|} \sum_{i=2}^{|G|} \eta_i
\end{align}
Recall that $\mathcal{W}_{Nom} (g_i) = 1/(2 - 2/|G|)$. Hence, $\eta_i$ has a variance at most $4 (1 - 1/|G|)^2 \cdot \sigma^2$. By Equation~\ref{eqn:nom-utility-1}, the noise in $g^*_1$ has a variance at most
\begin{align} \label{eqn:nom-utility-2}
&\left((1 - 1/|G|)^2 + \left(1/|G|\right)^2 \cdot (|G| - 1) \right) \cdot 4 (1 - 1/|G|)^2 \cdot \sigma^2 \nonumber \\
& \qquad \qquad \qquad = 4 \left(1 - 1/|G|\right)^3 \cdot \sigma^2.
\end{align}

Similarly, it can be proved that for each non-base coefficient $c^*$ in $C^*$, the variance of the noise in $c^*$ is at most $4 \cdot \left(1 - 1/f\right)^3 \cdot \sigma^2$, where $f$ is the fanout of $c^*$'s parent in the decomposition tree $R$. On the other hand, the base coefficient $c^*_0$ in $C^*$ has a noise variance at most $\sigma^2$, since it is identical to the base coefficient in $C'$.

Now consider any range-count query $q$, such that the predicate in $q$ corresponds to a certain node $N$ in the hierarchy $H$ associated with $M^*$. Let $h$ be the height of $H$. Given $M^*$, we can answer $q$ by summing up the set $S$ of entries that are in the subtree of $N$ in $H$. If $N$ is a leaf in $H$, then $S$ should contain only the entry $v^* \in M^*$ that corresponds to $N$. By Equation~\ref{eqn:nom-recon},
\begin{equation} \label{eqn:nom-utility-3}
v^* = c^*_{h-1} + \sum_{i=0}^{h-2} \left( c^*_i \cdot \prod_{j=i}^{h-2} \frac{1}{f_j} \right),
\end{equation}
where $c^*_i \in C^*$ is the ancestor of $v^*$ at the $(i+1)$-th level of the decomposition tree, and $f_i$ is the fanout of $c^*_i$. As discussed above, $c^*_0$ has a noise variance at most $\sigma^2$, while $c^*_i$ ($i \in [1, h-1]$) has a noise variance at most $4 \cdot \left(1 - 1/f_{i-1}\right)^3 \cdot \sigma^2$. By Equation~\ref{eqn:nom-utility-3} and the fact that $f_i \ge 1$, it can verified that the variances of the noise in $v^*$ is less than $4 \sigma^2$.

On the other hand, if $N$ is a level-($h-1$) node in $H$, then $S$ should contain all entries in $M^*$ that are children of $N$ in $H$. Observe that each of these entries has a distinct parent in the decomposition tree $R$, but they have the same ancestors at levels $1$ to $h-2$ of $R$. Let $c^*_i$ be the ancestor of these entries at level $i+1$, and $f_i$ be the fanout of $c^*_i$. Let $X$ be the set of wavelet coefficients in $R$ that are the parents of the entries in $S$. Then, $X$ should be a sibling group in $C^*$, and $|X| = f_{h-2}$. In addition, we have $\sum_{c^* \in X} c^* = 0$, as ensured by the mean substraction procedure. In that case, by Equation~\ref{eqn:nom-recon}, the sum of the entries in $S$ equals
\begin{align*}
\sum_{v^* \in S} v^* &= \sum_{c^* \in X} c^* + |X| \cdot \sum_{i=0}^{h-2} \left( c^*_i \cdot \prod_{j=i}^{h-2} \frac{1}{f_j} \right) \\
&= c^*_{h-2} + \sum_{i = 0}^{h-3} \left( c^*_i \cdot \prod_{j=i}^{h-3} \frac{1}{f_j} \right).
\end{align*}
Taking into account the noise variance in each $c^*_i$ and the fact that $f_j \ge 1$, we can show that the variance of noise in $\sum_{v^* \in S} v^*$ is also less than $4 \sigma^2$.

In general, we can prove by induction that, when $N$ is a level $k$ ($k \in [1, h-2]$) node in $H$, the answer for $q$ equals
\begin{equation}  \label{eqn:nom-utility-4}
c^*_{k - 1} + \sum_{i = 0}^{k-2} \left( c^*_i \cdot \prod_{j=i}^{k-1} \frac{1}{f_j} \right),
\end{equation}
where $c^*_{k-1}$ is a wavelet coefficient at level $k$ of the decomposition tree $R$, $c^*_i$ ($i \in [0, k-2]$) is the ancestor of $c^*_{k-1}$ at level $i+1$, and $f_i$ is the fanout of $c^*_i$. Based on Equation~\ref{eqn:nom-utility-4}, it can be shown that the noise variance in the answer for $q$ is less than $4 \sigma^2$, which completes the proof.

\subsection{Proof of Theorem~\ref{thrm:multi-amp}} \label{sec:proof-multi-amp}

Let $M'$ be any matrix that can be obtained by changing a certain entry $v$ in the input matrix $M$. Let $\delta = \|M - M'\|_1$, and $C_i$ ($C'_i$) be the step-$i$ matrix in the HN wavelet transform on $M$ ($M'$). For any $j \in [1, m]$, let $C_i(j)$ and $C'_i(j)$ denote the $j$-th entry in $C_i$ and $C'_i$, respectively. By the definition of generalized sensitivity, Theorem~\ref{thrm:multi-amp} holds if and only if the following inequality is valid:
\begin{eqnarray} \label{eqn:multi-amp-1}
\sum_{j = 1}^{m}\Big( \mathcal{W}_{HN}\big( C_d(j) \big) \cdot | C_d(j) - C'_d(j) | \Big) \le \delta \cdot \prod_{i=1}^d \mathcal{P}(A_i).
\end{eqnarray}
To establish Equation~\ref{eqn:multi-amp-1}, it suffices to prove that the following inequality holds for any $k \in [1, d]$.
\begin{eqnarray} \label{eqn:multi-amp-2}
\sum_{j = 1}^{m}\Big( \mathcal{W}_{HN}\big( C_k(j) \big) \cdot | C_k(j) - C'_k(j) | \Big) \le \delta \cdot \prod_{i=1}^k \mathcal{P}(A_i).
\end{eqnarray}
Our proof for Equation~\ref{eqn:multi-amp-2} is based on an induction on $k$. For the base case when $k=1$, Equation~\ref{eqn:multi-amp-2} directly follows from Lemmas \ref{lmm:ord-amp} and \ref{lmm:nom-amp}. Assume that Equation~\ref{eqn:multi-amp-2} holds for some $k=l \in [1, d-1]$. We will show that the case when $k = l+1$ also holds.

Consider that we transform $C_l$ into $C'_l$, by replacing the entries in $C_l$ with the entries in $C'_l$ one by one. The replacement of each entry in $C_l$ would affect some coefficients in $C_{l+1}$. Let $C^{(\alpha)}_{l+1}$ denote the modified version of $C_{l+1}$ after the first $\alpha$ entries in $C_l$ are replaced. By Lemmas \ref{lmm:ord-amp} and \ref{lmm:nom-amp} and by the way we assign weights to the coefficients in $C_{l+1}$,
\begin{flalign*}
\sum_{j=1}^m \Big( \frac{\mathcal{W}_{HN}\big(C_{l+1}(j)\big)}{\mathcal{W}_{HN}\big(C_{l}(\alpha)\big)} \cdot \left|C^{(\alpha)}_{l+1}(j) - C^{(\alpha-1)}_{l+1}(j)\right| \Big) \\
\le \mathcal{P}(A_{l+1}) \cdot \left|C_{l}(\alpha) - C'_l(\alpha)\right|.
\end{flalign*}
This leads to
\begin{align} \label{eqn:multi-amp-3}
&\mathcal{P}(A_{l+1}) \cdot \sum_{\alpha=1}^m \Big( \mathcal{W}_{HN}\big(C_{l}(\alpha)\big) \cdot |C_{l}(\alpha) - C'_l(\alpha)| \Big) \nonumber \\
&\quad \ge \sum_{\alpha = 1}^m \ \sum_{j=1}^m \Big(\mathcal{W}_{HN}\big(C_{l+1}(j)\big) \cdot \left|C^{(\alpha)}_{l+1}(j) - C^{(\alpha-1)}_{l+1}(j)\right| \Big) \nonumber \\
&\quad \ge \sum_{j=1}^m \Big( \mathcal{W}_{HN}\big(C_{l+1}(j)\big) \cdot \left|C_{l+1}(j) - C'_{l+1}(j)\right| \Big)
\end{align}
By Equation \ref{eqn:multi-amp-3} and the induction hypothesis, Equation~\ref{eqn:multi-amp-2} holds for $k = l+1$, which completes the proof.

\subsection{Proof of Theorem~\ref{thrm:multi-utility}} \label{sec:proof-multi-utility}

Our proof of Theorem~\ref{thrm:multi-utility} utilizes the following proposition.

\extraspacing
\begin{proposition} \label{prop:multi-linear}
Let $M$, $M'$, and $M''$ be three matrices that have the same set of dimensions. Let $M_d$, $M'_d$, and $M''_d$ be HN wavelet coefficient matrix of $M$, $M'$, and $M''$, respectively. If $M + M' = M''$, then $M_d + M'_d = M''_d$.
\end{proposition}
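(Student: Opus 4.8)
The plan is to exploit the fact that the HN wavelet transform is a composition of one-dimensional Haar and nominal wavelet transforms, each of which is a linear map on the matrix entries. Since a composition of linear maps is itself linear, and any linear map $L$ satisfies $L(M) + L(M') = L(M'')$ whenever $M + M' = M''$, the proposition reduces to verifying linearity of the two building-block transforms and then chaining them through an induction on the step index.

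First I would establish that the one-dimensional Haar wavelet transform is linear. Each Haar coefficient is defined as $(a_1 - a_2)/2$, where $a_1$ and $a_2$ are averages of entries in the left and right subtrees, and the base coefficient is the mean of all entries; each of these is a fixed linear combination of the input entries whose coefficients depend only on the decomposition tree, not on the entry values. Hence the map from an input vector to its Haar coefficient vector is linear. Next I would verify the analogous claim for the nominal wavelet transform: each coefficient equals a leaf-sum minus the average leaf-sum of its parent's children, and a leaf-sum is simply the sum of a fixed subset of entries; therefore every nominal coefficient is again a fixed linear combination of the input entries, determined solely by the hierarchy $H$. The over-completeness of the nominal transform (one coefficient per internal node of the decomposition tree) does not affect linearity, since the number and identity of the coefficients depend only on $H$, which is shared by $M$, $M'$, and $M''$.

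With these two facts in hand, I would argue by induction on the step index $i \in [1, d]$ that $C_i + C'_i = C''_i$, where $C_i$, $C'_i$, $C''_i$ are the step-$i$ matrices produced by the HN wavelet transform on $M$, $M'$, $M''$, respectively. For the base case $i = 1$, the step-1 transform partitions each matrix into one-dimensional vectors along the first dimension; because the three matrices share the same dimensions, they are partitioned identically, and the $j$-th entry of $M''$ equals the sum of the $j$-th entries of $M$ and $M'$. Applying the linear one-dimensional transform to each vector and reassembling using coordinates that depend only on the decomposition structure yields $C_1 + C'_1 = C''_1$. The inductive step is identical in form: assuming $C_l + C'_l = C''_l$, the step-$(l+1)$ transform acts along dimension $l+1$ by the same vector-wise linear operation, so $C_{l+1} + C'_{l+1} = C''_{l+1}$. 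After $d$ steps we obtain $C_d + C'_d = C''_d$, i.e.\ $M_d + M'_d = M''_d$, as required.

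The main obstacle, such as it is, lies in confirming that every step of the transform is genuinely value-independent in its structure: the partition of each $C_{i-1}$ into vectors, the assignment of coordinates to the resulting coefficients, and (for nominal dimensions) the choice of decomposition tree all depend solely on the shared dimension structure and hierarchies, never on the numerical entries. Once this is checked, the vector-wise additivity of each one-dimensional transform propagates cleanly through the composition, and the induction goes through without any further computation.
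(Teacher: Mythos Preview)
Your proposal is correct and follows essentially the same approach as the paper: both argue that the Haar and nominal one-dimensional transforms are linear (each coefficient being a fixed linear combination of the input entries), and hence their composition---the HN transform---is linear, which immediately gives $M_d + M'_d = M''_d$ from $M + M' = M''$. The paper's proof is simply more terse, omitting the explicit induction on the step index and the structural value-independence check that you spell out.
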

\begin{proof} Observe that both the Haar and nominal wavelet transforms are linear transformations, since each wavelet coefficient they produce is a linear combination of the entries in the input matrix. Consequently, the HN wavelet transform, as a composition of the Haar and nominal wavelet transforms, is also a linear transformation. Therefore, $M + M' = M''$ implies $M_d + M'_d = M''_d$.
\end{proof}
\extraspacing

In the following, we will prove Theorem~\ref{thrm:multi-utility} by an induction on $d$. For the base case when $d=1$, the theorem follows directly from Lemmas \ref{lmm:ord-utility} and \ref{lmm:nom-utility}. Assume that theorem also holds for some $d = k \ge 1$. We will prove that the case for $d = k + 1$ also holds.

Let $C^*_k$ be the step-$k$ matrix reconstructed from $C^*_{k+1}$ by applying inverse wavelet transform along the $(k+1)$-th dimension of $C^*_{k+1}$. We divide $C^*_k$ into $|A_{k+1}|$ sub-matrices, such that each matrix $C^*_k[a]$ contains all entries in $C^*_k$ whose last coordinate equals $a \in A_{k+1}$. Observe that each $C^*_k[a]$ can be regarded as a $k$-dimensional HN wavelet coefficient matrix, and can be used to reconstruct a noisy frequency matrix $M^*[a]$ with $k$ dimension $A_1, \ldots, A_k$.

Consider any range-count query $q$ on $M^*$ with a predicate ``$A_i \in S_i$'' on $A_i$ ($i \in [1, k+1]$). Let us define a query $q'$ on each $M^*[a]$, such that $q'$ has a predicate ``$A_i \in S_i$'' for any $i \in [1, k]$. Let $q(M^*)$ be the result of $q$ on $M^*$, and $q'(M^*[a])$ be the result of $q'$ on $M^*[a]$. Let $M' = \sum_{a \in S_{k+1}} M^*[a]$. It can be verified that
\begin{equation} \label{eqn:multi-utility-1}
q(M^*) = \sum_{a \in S_{k+1}} q'(M^*[a]) = q'(M').
\end{equation}
Therefore, the theorem can be proved by showing that the noise in $q'(M')$ has a variance at most $\sigma^2 \cdot \prod_{i=1}^d \mathcal{H}(A_i)$. For this purpose, we will first analyze the the noise contained in the HN wavelet coefficient matrix $C'_k$ of $M'$.

By Proposition~\ref{prop:multi-linear} and the definition of $M'$, we have $C'_k = \sum_{a \in S_{k+1}} C^*_k[a]$. Let $c'$ be an arbitrary coefficient in $M'_k$ with a coordinate $x_i$  on the $i$-th dimension ($i \in [1, k]$). Let $V^*_k$ ($V^*_{k+1}$) be a vector that contains all coefficients in $C^*_k$ ($C^*_{k+1}$) whose coordinates on the first $k$ are identical to those of $c'$. Then, $c'$ can be regarded as the result of a range-count query on $V^*_k$ as follows:
\begin{tabbing}
\hspace{6mm} \=  \kill
\noindent \> \texttt{SELECT} \texttt{COUNT}(*) \texttt{FROM} $V^*_k$ \\
\noindent \> \texttt{WHERE} $A_{k+1} \in S_{k+1}$
\end{tabbing}
Observe that $V^*_k$ can be reconstructed from $V^*_{k+1}$ by applying inverse wavelet transform. Since each coefficient $c^* \in C^*_{k+1}$ has a noise variance $\left(\sigma / \mathcal{W}_{HN}(c^*)\right)^2$, by Lemmas \ref{lmm:ord-utility} and \ref{lmm:nom-utility}, the result of any range-count query on $V^*_k$ should have a noise variance at most
\begin{equation*} \label{eqn:multi-utility-2}
\mathcal{H}(A_{k+1}) \cdot \left( \sigma \cdot \frac{\mathcal{W}_{k+1}(c^*)}{\mathcal{W}_{HN}(c^*)} \right)^2,
\end{equation*}
where $\mathcal{W}_{k+1}$ is the weight function associated with the one-dimensional wavelet transform used to convert $C_k$ into $C_{k+1}$. It can be verified that
\begin{equation*} \label{eqn:multi-utility-3}
\mathcal{W}_{k+1}(c^*) / \mathcal{W}_{HN}(c^*) = 1 / \mathcal{W}_{HN}(c').
\end{equation*}
Therefore, the noise in $c'$ has a variance at most $\mathcal{H}(A_{k+1}) \cdot \left(\sigma / \mathcal{W}_{HN}(c')\right)^2$.

In summary, if we divide the coefficients in $C^*_{k+1}$ into a set $U$ of vectors along the $(k+1)$-th dimension, then each coefficient $c' \in M'_k$ can be expressed as a weighted sum of the coefficients in a distinct vector in $U$. In addition, the weighted sum has a noise with a variance at most $\mathcal{H}(A_{k+1}) \cdot \left(\sigma / \mathcal{W}_{HN}(c')\right)^2$. Furthermore, the noise in different coefficients in $M'_k$ are independent, because (i) the noise in different vectors in $U$ are independent, and (ii) no two coefficients in $M'_k$ correspond to the same vector in $U$. Therefore, by the induction hypothesis, for any range-count query on the $k$-dimensional frequency matrix $M'$ reconstructed from $M'_k$, the query results has a noise with a variance no more than $\sigma^2 \cdot \sum_{i = 1}^{k+1} \mathcal{H}(A_i)$. This, by Equation~\ref{eqn:multi-utility-1}, shows that the theorem also holds for $d = k + 1$, which completes the proof. 

\end{document}